\let\E\relax
\let\U\relax
\let\G\relax
\let\define\relax
\DeclareMathOperator{\G}{\mathcal{G}}
\DeclareMathOperator{\R}{\mathcal{R}}
\DeclareMathOperator{\E}{\mathbb{E}}
\DeclareMathOperator{\I}{\mathbb{I}}
\DeclareMathOperator{\U}{\mathcal{U}}
\DeclareMathOperator{\Dis}{\mathcal{D}}
\DeclareMathOperator{\gr}{Greedy}
\DeclareMathOperator{\dA}{\dot{A}}
\DeclareMathOperator{\dD}{\dot{D}}
\DeclareMathOperator{\dL}{\dot{L}}
\DeclareMathOperator*{\argmax}{arg\,max}
\DeclareMathOperator{\define}{\coloneqq}
\DeclareMathOperator{\neqZ}{\mathbb{Z}_{\geq 0}}
\DeclareMathOperator{\Ind}{Indi}
\DeclareMathOperator{\Ma}{M_a}
\DeclareMathOperator{\Mt}{M_t}
\DeclareMathOperator{\ifadd}{IfAdd}
\DeclareMathOperator{\true}{True}
\DeclareMathOperator{\false}{False}
\DeclareMathOperator{\InDeg}{InDeg}
\theoremstyle{definition}
\newtheorem{problem}{\textbf{Problem}}
\newtheorem{remark}{\textbf{Remark}}
\newtheorem{policy}{\textbf{Policy}}
\newtheorem{lemma}{\textbf{Lemma}}
\newtheorem{definition}{Definition}
\newcommand{\RNum}[1]{\uppercase\expandafter{\romannumeral #1\relax}}
\begin{document}
%
\title{Time-constrained Adaptive Influence Maximization}
%
%
%

\author{Guangmo~Tong,~\IEEEmembership{Member,~IEEE,}
        Ruiqi~Wang,
        Zheng~Dong, ~\IEEEmembership{Member,~IEEE,}
        and~Xiang~Li,~\IEEEmembership{Member,~IEEE}
\thanks{G. Tong and R. Wang are with the Department
of Computer and Information Sciences, University of Delaware, E-mail: \{amotong, wangrq\}@udel.edu.}
\thanks{Z. Dong is with Department of Computer Science, Wayne State University, E-mail:  dong@wayne.edu.}
\thanks{X. Li is with Department of Computer Science and Engineering, Santa Clara University, E-mail: xli8@scu.edu.}}

%
%

\markboth{Journal of \LaTeX\ Class Files,~Vol.~14, No.~8, August~2015}%
{Shell \MakeLowercase{\textit{et al.}}: Bare Demo of IEEEtran.cls for IEEE Journals}
%



\maketitle

\begin{abstract}
The well-known influence maximization problem aims at maximizing the influence of one information cascade in a social network by selecting appropriate seed users prior to the diffusion process. In its adaptive version, additional seed users can be selected after observing certain diffusion results. On the other hand, social computing tasks are often time-critical, and therefore only the influence resulted in the early period is worthwhile, which can be naturally modeled by enforcing a time constraint. In this paper, we present an analysis of the \textit{time-constrained adaptive influence maximization} problem. On the theory side, we provide the hardness results of computing the optimal policy and a lower bound on the adaptive gap. For practical solutions, from basic to advanced, we design a series of seeding policies for achieving high efficacy and scalability. Finally, we investigate the proposed solutions through extensive simulations based on real-world datasets. 
\end{abstract}

\begin{IEEEkeywords}
time-constrained influence maximization, algorithms, seeding pattern design 
\end{IEEEkeywords}

%
\IEEEpeerreviewmaketitle

%
%
%
%
 \section{Introduction}
\IEEEPARstart{A}{s} one of the core research branches in social network analysis, influence maximization (\textbf{IM}), proposed by Kempe, Kleinberg and Tardos \cite{kempe2003maximizing}, studies the problem of launching information cascades such that the influence can be maximized. Inspired by influence maximization, various topics in online social networks have been investigated, such as misinformation control, online friending, and viral marketing \cite{li2018influence,zhang2014recent,sun2011survey,aslay2018influence}. The classic influence maximization problem adopts two settings: (a) \textit{non-adaptive strategy}: the seed users are all computed before the diffusion process, and (b) \textit{unlimited time steps}: the influence is counted without a time limit. These classic settings are elegant, but they are incapable of modeling many real applications. First, in order to optimize the seeding selection, one often prefers to deploy seed nodes adaptively, which is formulated as the Adaptive Influence Maximization (\textbf{AIM}) problem \cite{golovin2010adaptive,tong2017adaptive,vaswani2016adaptive,han2018efficient,chen2019adaptivity}. Allowing an adaptive seeding enables us to identify the best seed node(s) conditioned on the observed diffusion results, and it therefore can result in a higher influence under the budget constraint. For example, a higher profit would be expected if our online advertisements were posted adapted to customer feedback \cite{kazienko2007adrosa}. Second, time-critical applications are commonly seen in online social networks, and in such cases, only the influence resulted before the deadline matters. For instance, launching a positive cascade to counter misinformation is expected to exert effects expeditiously \cite{farajtabar2017fake,tong2018misinformation}. For such scenarios, we would like to maximize the influence under a time constraint, which is termed as the Time-constrained Influence Maximization (\textbf{TIM}) problem \cite{liu2012time,liu2013influence,han2017time,xie2015dynadiffuse,chen2012time}.  In order to support time-critical tasks through adaptive seeding methods, we in this paper propose the Time-constrained Adaptive Influence Maximization (\textbf{TAIM}) problem.

\textbf{Problem Formulation.} An adaptive seeding process alternates between seeding steps and diffusing steps, and in each seeding step, we select a set of seed users to trigger more influence, and our decision is made adapted to the observed diffusion results. An adaptive seeding policy essentially consists of two modules, \textit{seeding pattern} and \textit{node selection rule}, where the seeding pattern specifies the size of the seed set while the node selection rule determines which nodes to select.  Given two integers $K, T \in \mathbb{Z}^+$, the TAIM problem asks for a policy to deploy $K$ seed nodes in an adaptive manner such that the total influence resulted in the first $T$ diffusion rounds can be maximized. In this paper, we study the TAIM problem and aim at both the theoretical analysis and practical solution design. 

\textbf{A Key Trade-off.} The TAIM problem is a natural combination of AIM and TIM, both of which have been extensively studied and have been shown to admit the ($1-1/e$)-approximation subject to controllable sampling errors. For the AIM problem, the optimal seeding policy follows the full-adoption feedback model \cite{golovin2010adaptive} in which (a) before making the next seeding decision, we always keep observing the diffusion process until it terminates, and (b) we always use one budget whenever a seeding action has to be performed. Such a seeding pattern is intuitively optimal as it maximally obtains observations before selecting the next seed node. However, when a time constraint is enforced, one can see that the full-adoption feedback model is not optimal anymore, because waiting for more diffusion rounds, though brings more feedback, will incur the loss of future diffusion rounds. In short, waiting is not ``free'' in TAIM. Consequently, the critical issue is to determine the balance between (a) waiting for more feedback and (b) performing a seeding action at an early stage. We observe that solving such a trade-off in optimal is theoretically hard, making TAIM different from the existing problems. Through appropriate methods designed in this paper for achieving a reasonable trade-off, we have been able to design seeding policies that can solve the TAIM problem effectively.

\textbf{Contributions.} This paper presents a systematic analysis of the TAIM problem, and the contributions are briefly summarized as follows:
\begin{itemize}
	\item We perceive the adaptive seeding process as a procedure alternating between seeding steps and diffusing steps, based on which we propose the Time-constrained Adaptive Influence Maximization (TAIM) problem, which finds the policy to compute a seed set in each seeding step subject to a budget constraint such that the influence within a time limit can be maximized. 
	\item Theoretically, we prove that TAIM problem exhibits a unique hardness that is different from existing problems such as IM or AIM. Furthermore, we provide the first result on the adaptive gap for the time-constrained case and prove a lower bound of $\frac{e^2-2}{e-1}$. 
	\item Towards solving TAIM effectively, we design a sampling method to enable an efficient greedy node selection rule for the time-constrained case, based on which we propose a collection of seeding policies, from basic to advanced, including static seeding policy, greedy seeding policy, and several foresight seeding policies. We experimentally evaluate the proposed polices through simulations on real-world graphs, in terms of effectiveness, efficiency and robustness. As a minor part, we contribute a new Reddit dataset for studying information diffusion. Our source code and data will be made online available.
\end{itemize}

\textbf{Roadmap.} The related work will be introduced in Sec. \ref{sec: relate}. We provide the preliminaries in Sec. \ref{sec: pre}, including the diffusion model and the formulation of the TAIM problem. The theoretical analysis is given in Sec. \ref{sec: theory}, and the designed seeding policies are then described in Sec. \ref{sec: strategy}. In Sec. \ref{sec: exp}, we present the experimental study. Sec. \ref{sec: con} concludes. 

\section{Related Work}
\label{sec: relate}
Influence maximization and its variants have been extensively studied. In this section, we survey the works germane to our work. 

\textbf{IM, TIM, and AIM.} The IM problem \cite{kempe2003maximizing} investigates the strategy to launch an information cascade in social networks, with the goal of maximizing the resulted influence. It has been proved that the IM problem is monotone and submodular under the classic diffusion models (e.g., independent cascade model and linear threshold model), and therefore a $(1-1/e)$-approximation can be readily obtained by the greedy strategy due to the celebrated results of Nemhauser \textit{et al.} \cite{nemhauser1978analysis}. However, the objective function (i.e., influence) of the IM problem is \#$P$-hard to compute, so efficient heuristics were designed by various methods (e.g., \cite{chen2009efficient,chen2010scalable}). Borgs \textit{et al.} \cite{borgs2014maximizing} later invented the reverse sampling technique resulting in an efficient algorithm without sacrificing the performance guarantees. The reverse sampling technique was further improved by a series of works \cite{tang2014influence,tang2015influence,nguyen2016stop}, and currently, the IM problem can be solved efficiently on even very large networks. In order to support time-critical applications, researchers have further considered the IM problem with a time constraint \cite{liu2012time, liu2013influence,han2017time,xie2015dynadiffuse,chen2012time,dinh2013cost}. The TIM problem remains monotone and submodular, so the greedy algorithm still gives an effective approximation solution. Because the diffusion process is stochastic, it is possible to adopt an adaptive seeding policy where we could compute the seed nodes after observing the diffusion feedback, which was first considered by Golovin \textit{et al.} \cite{golovin2011adaptive} using the technique of adaptive submodularity. Under the budget constraint, a non-adaptive seeding policy computes a subset of nodes with a specified size, while an adaptive seeding policy computes a seed set in each seeding step according to the observations subject to the budget constraint. Without a time constraint, it has been shown that the full-adoption feedback model \cite{golovin2010adaptive} combined with the greedy node selection rule would give a $(1-1/e)$-approximation for the AIM problem under the budget constraint \cite{tong2017adaptive}. However, with a time constraint, the problem is not adaptive submodular \cite{vaswani2016adaptive,golovin2011adaptive}, and thus the current technique cannot be applied. Vaswani \textit{et al.} \cite{vaswani2016adaptive} (Arxiv.org) considered both the AIM and TAIM problem\footnote{They called time constraint as bounded time horizon.} and suggested using sequential model based optimization (SMBO) \cite{hutter2011sequential} to deal with the general case. While their ideas are intuitive, they did not provide the detailed implementation and their experiments for TAIM focused on examining the average adaptivity gain but not the efficacy in solving TAIM. Other works have studied the AIM problem under specific feedback models \cite{sun2018multi,salha2018adaptive,seeman2013adaptive, horel2015scalable,tong2019adaptive} or considered the trade-off between delay and efficiency \cite{tang2020influence, stein2017heuristic} based on partial feedback, but their settings still allow the diffusion process to complete and therefore cannot meet a hard time constraint. 



\textbf{Adaptive Gap.} Another important concept is the adaptive gap, which measures the ratio between the optimal adaptive policy and the optimal non-adaptive policy. Despite the recent results in \cite{peng2019adaptive,chen2019adaptivity,fujii2019beyond}, the adaptive gap under most of the feedback models is still open. In this paper, we provide the first result on the lower bound for the time-constrained case.

\section{Preliminaries}
\label{sec: pre}
This section provides the preliminaries to the rest of the paper. We first describe the considered diffusion model and then present the formulation of the TAIM problem.

\subsection{Diffusion Model}
We consider the classic independent cascade model in which a social network is given by a directed graph $G=(V, E)$, and associated with each edge $e \in E$ there is a propagation probability $p_e \in (0,1]$. We use $n \in \neqZ$ and $m \in \neqZ$ to denote the number of nodes and edges, respectively. A cascade is triggered by the seed users who are \textit{active} after selected. When a user $u$ becomes active, they have one chance to activate each inactive neighbor $v$ with a success probability of $p_{(u,v)}$.\footnote{For the purpose of analysis, we assume that $p_{(u,v)}$ is positive as we can remove the edges with zero propagation probability.} We assume that the diffusion process goes round by round.
\begin{definition}[\textbf{Round}]
	In each diffusion \textit{round}, the users, who are activated either by their neighbors in the last diffusion round or by being selected as new seed users, attempt to activate their inactive neighbors. 
\end{definition}
The diffusion process can be viewed as a stochastic BFS process. In this paper, the diffusion time is measured by the number of rounds.

\subsection{Seeding Process and Policy}
A seeding process consists of seeding steps and diffusing steps. In a seeding step, we can observe the activation results in the previous diffusion rounds, which can be equivalently represented by the states of the edges. In particular, we say the edge $(u,v)$ is \textit{live} if $u$ can successfully activate $v$. Otherwise, we say it is \textit{dead}. Therefore, an intermediate stage of a diffusion process is deductively determined by the current active nodes and the sets of live and dead edges. We introduce the concept of status for such a purpose.
\begin{definition}[\textbf{Status}]
	A \textit{status} $U=(\dA(U), \dL(U), \dD(U)) \in 2^V\times 2^E \times 2^E$ is given by a three-tuple with $\dL(U) \cap \dD(U) = \emptyset$, where $\dA(U)$ denotes the set of current active nodes, $\dL(U)$ and $\dD(U)$ are, respectively, the sets of live edges and dead edges. An edge is not observed yet iff it is not in  $\dL(U) \cup \dD(U)$. We use $\Phi$ to denote the status space.
\end{definition}

We employ the next concept to describe the scenario when the diffusion process terminates spontaneously.

\begin{definition}[\textbf{Final Status}]
	A status $U=(\dA(U), \dL(U), \dD(U)) $ is \textit{final} if all the edges from  $\dA(U)$ to $V\setminus \dA(U)$ are dead. That is, $\{(u,v)\in E: u \in \dA(U),~ v \in V\setminus \dA(U)\} \subseteq \dD(U)$, which implies that no node can be further activated unless new seed nodes are selected.
\end{definition}

\begin{definition}[\textbf{State}]
	We use $(U,t,k) \in \Phi \times \neqZ \times \neqZ$ to denote a \textit{state} of the seeding process, implying that the current status is $U$, the number of remaining diffusion rounds is $t$, and the remaining budget is $k$. 
\end{definition}

\begin{definition}[\textbf{Policy}]
	Given a state $(U,t,k)$  in a seeding step, a \textit{policy} $\pi$ computes a seed set $\pi(U,k,t) \subseteq V$ to be selected with $|\pi(U,k,t)|\leq k$. A policy  $\pi$ is \textit{non-adaptive} if it has $|\pi(U,k,t)|=k$ for each state $(U,k,t)$.
\end{definition}

\begin{definition}[\textbf{Seeding Process}]
	For a diffusion model with a time constraint $T\in \neqZ$, and a budget $K \in \neqZ$, the \textit{seeding process} under a policy $\pi$ is described as follows:
	\begin{itemize}
		\item Set $U=(\emptyset,\emptyset,\emptyset), k=K$ and $t=T$. Iterate the following process for $T$ times.
		\begin{itemize}
			\item \textbf{Seeding Step.} Compute and launch the seed set $\pi(U, k, t)$. Set $k=k-|\pi(U, k, t)|$.
			\item \textbf{Diffusing Step.} Observe the diffusion process for one round. Set $t=t-1$ and update $U$ as the observed status.
		\end{itemize}  
		\item Output the influence (i.e., the number of active nodes).
	\end{itemize}
\end{definition}

We use $f(\pi, K, T)$ to denote the expected influence associated with a policy $\pi$.  For a non-adaptive policy that selects a particular set $S \subseteq V$ as the seed nodes, we denote the resulted influence as $f(S, K, T)$.

\begin{remark}
	It is possible that no seed node is selected in a certain seeding step (i.e., $ \pi(U, k, t)=\emptyset$), which means that the policy would wait for more diffusion rounds. 
\end{remark}

\subsection{Problem Formulation}
The problem considered in this paper is stated below.

\begin{problem}[\textbf{TAIM Problem}]
	\label{problem: t-aim}
	Given a diffusion model, a time constraint $T\in \neqZ$, and a budget $K \in \neqZ$, design a policy $\pi$ such that $f(\pi, K, T)$ is maximized.
\end{problem}

\begin{remark}[\textbf{Special Cases}]
	\label{remark: relate}
	The TAIM problem is closely related to several problems that have been considered in the existing literature. 
	
	\begin{itemize}
		\item When $T=\infty$, it is exactly the unconstrained AIM problem, which admits a $(1-1/e)$-approximation achieved by combining the full-adoption feedback model and the greedy node selection rule. 
		
		\item When we have $p_e=1$ for each edge $e \in E$ (i.e., the deterministic model) or we have $T=1$, the optimal policy must be non-adaptive and therefore the greedy node selection rule provides a $(1-1/e)$-approximation. 
	\end{itemize} 
	A policy for the TAIM problem, if not optimal, should ideally provide the best possible solution when applied to those special cases. 
\end{remark}

\section{Theoretical Analysis}
\label{sec: theory}
\subsection{Hardness}
The complexity of a seeding policy is measured by the computability of $\pi(U,t,k)$. When solving the TAIM problem, we essentially consider two questions: (a) how many seed nodes to select and (b) which nodes to select. We refer the solution to the first question as a \textit{seeding pattern}. Not very surprisingly, both questions are computationally hard, and thus efficient optimal solutions are pessimistic. First, while TAIM does not generalize IM, we can use a reduction similar to that in \cite{kempe2003maximizing}. In particular, when the underlying graph is directed and bipartite with $p_e=1$ for each edge $e$, the TAIM problem generalizes the maximum coverage problem in a straightforward manner. 
Second, there exists an instance of TAIM of which the hardness is resulted from designing optimal seeding patterns but not from selecting seed nodes, which indicates that TAIM is combinatorially different from IM and AIM. 

\begin{lemma}
	\label{lemma: hardness_2}
	Even if the optimal seed nodes can be computed in polynomial time, the optimal policy for TAIM is not polynomial-time computable unless the decision version of s-t connectedness can be solved in polynomial time.
\end{lemma}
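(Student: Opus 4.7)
The plan is to reduce the decision version of probabilistic s-t connectedness (i.e.,\ two-terminal network reliability) to the computation of an optimal TAIM policy. The reduction will construct a TAIM instance in which the node selection rule is trivial --- only one candidate vertex is ever worth seeding in each state --- so that all the computational difficulty is concentrated in choosing the seeding pattern. This matches the phrasing of the lemma, which presumes that optimal seed nodes are available for free.

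First, I would take an arbitrary instance of s-t connectedness: a directed probabilistic graph $H$ with source $s$, sink $t$, and rational threshold $p$, asking whether $\Pr[s \leadsto t \text{ in } H] \geq p$. I would build a TAIM graph $G$ by embedding $H$ and attaching a short gadget at $t$ whose influence is unlocked only when $t$ has been activated. I would set the budget $K$ and deadline $T$ to small constants, chosen so that two candidate patterns both fit: $\pi_{\text{now}}$, which spends the entire budget at round $1$, and $\pi_{\text{wait}}$, which spends one unit of budget at round $1$ and defers the rest until after observing whether $t$ has become active.

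Next, I would verify that in $G$ the node selection at every state is degenerate --- only a single designated vertex has nonzero marginal influence, so the claimed polynomial-time oracle for node selection is trivial. With node selection pinned down, the expected influences $f(\pi_{\text{now}}, K, T)$ and $f(\pi_{\text{wait}}, K, T)$ can both be written as affine functions of the single quantity $q \coloneqq \Pr[s \leadsto t \text{ in } H \text{ within the available rounds}]$. By calibrating the edge probabilities and the size of the gadget, I would arrange that $f(\pi_{\text{wait}}, K, T) > f(\pi_{\text{now}}, K, T)$ iff $q \geq p$. Hence identifying the optimal seeding pattern answers the reliability decision question.

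The main obstacle will be designing the gadget cleanly so that the influence from the rest of $G$ contributes the same additive term to both $f(\pi_{\text{now}}, K, T)$ and $f(\pi_{\text{wait}}, K, T)$, leaving their difference as a clean function of $q$ alone. Carefully tracking which edges can fire in which round under each pattern --- and ensuring that only the event $\{s \leadsto t\}$ distinguishes the two --- is the delicate step. Once the gadget is in place, a polynomial-time TAIM oracle would decide s-t connectedness in $H$, completing the proof.
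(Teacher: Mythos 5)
Your high-level plan coincides with the paper's: embed the s-t connectedness instance into the diffusion graph, attach satellite groups of nodes so large that node selection is forced, and arrange matters so that the only nontrivial decision is the seeding pattern --- spend the second budget unit now or wait one more round --- with the comparison reducing to a threshold test on the $s$--$t$ connection probability. The gap is that the step you defer as ``the delicate step'' is in fact the entire technical content of the proof, and your sketch as stated does not yet work. Two concrete problems. First, you set $T$ to a small constant; then the quantity $q$ governing your comparison is the probability that $t$ is reached within a constant number of rounds, which is not the s-t connectedness probability you are reducing from. The paper instead sets $T=n_s+2$, exceeding the length of any simple path in the embedded graph, precisely so that ``the observed node activated within the first $n_s$ rounds'' coincides with ``$s_1$ and $s_2$ connected in the sampled subgraph.''

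Second, and more fundamentally, for $\pi_{\text{wait}}$ ever to beat $\pi_{\text{now}}$ there must be an event that can still occur during the extra waiting round and that changes where the deferred seed should go; otherwise waiting yields no new information, only a lost diffusion round, so $\pi_{\text{now}}$ weakly dominates for every value of $q$ and no threshold test emerges. The paper manufactures this event with an auxiliary simple path $P_s$ of length $n_s$ from $s_1$ to $s_2$ (one edge of probability $p_1$, the rest deterministic), which activates the observed node $s_3$ exactly at round $n_s+1$ with probability $(1-p^*)p_1p_2$; the now-versus-wait comparison then becomes $\frac{1-p^*p_2}{(1-p^*)p_1p_2}\geq \frac{D}{C}$, and the sizes of the two satellite groups $C$ and $D$ are calibrated so that this inequality holds iff the number of $s_1$--$s_2$-connected subgraphs is at most $k$. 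Your affine-in-$q$ calibration is the right instinct, but without a second, delayed activation channel and observation-dependent placement of the deferred seed, there is nothing to calibrate.
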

\begin{proof}
See Appendix.
\end{proof}

\subsection{Adaptive Gap}
For an instance $\I$ of TAIM, let $A_{opt}^{\I}\define \max_{\pi}f(\pi,T, K)$ be the maximum influence resulted by any policy, and \[N_{opt}^{\I}\define \max_{S \subseteq V, |S|\leq K}f(S,T, K)\] be the maximum influence resulted by a non-adaptive policy. The adaptive gap is defined as  $\sup \frac{A_{opt}^{\I}}{N_{opt}^{\I}}$ over the instances of TAIM, which measures the worst-case performance of the optimal non-adaptive policy compared to the optimal adaptive policy. Since non-adaptive policies can often be efficiently computed, one can adopt a non-adaptive one if the adaptive gap is small. In this paper, we provide a lower bound of the adaptive gap.

\begin{lemma}
	\label{lemma: lower_bound}
	The adaptive gap for the TAIM problem is at least $\frac{e^2-2}{e-1}$.
\end{lemma}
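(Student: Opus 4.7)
The plan is to prove the lower bound by exhibiting an explicit family of TAIM instances $\{\I_n\}_{n \in \mathbb{N}}$ and showing that the ratio $A_{opt}^{\I_n}/N_{opt}^{\I_n}$ tends to $\frac{e^2-2}{e-1}$ as $n \to \infty$. The construction I have in mind consists of $n$ symmetric stochastic ``branches'' rooted at probe nodes, with edges of probability $\Theta(1/n)$ feeding into deterministic reward subtrees whose depth matches the time budget $T$. The choice of $1/n$-probability edges is the key design move: it sets up the limit $(1-1/n)^n \to 1/e$, which is what makes the $e$-dependence of the final bound come out cleanly, and it forces a sharp contrast between a non-adaptive policy (which must commit all $K$ seeds at time $0$ without observing anything) and an adaptive policy (which can spend part of its budget as a cheap ``probe'' at time $0$, use one round of diffusion as feedback, and then concentrate the remaining budget on the directions where a large reward is still reachable within the remaining $T-1$ rounds).

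To lower bound $A_{opt}^{\I_n}$, I would exhibit an explicit adaptive policy: at time $0$ it seeds a specific set of probe nodes; at time $1$ it observes which probes successfully fired into their hubs; and in that same seeding step it uses the remaining budget to cover precisely the hubs whose probes did not fire, so that every branch has its full reward subtree reachable within $T$ rounds. A linearity-of-expectation calculation, combined with $\lim_n (1-1/n)^n = 1/e$, then gives a closed-form limit for the expected influence of this policy, which (after normalization by the per-branch reward size) behaves like a constant of the form $\alpha + \beta(1-1/e)$.

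To upper bound $N_{opt}^{\I_n}$, I would first argue by permutation symmetry across the $n$ branches that the optimal non-adaptive policy may be taken to distribute its seeds identically across the branches. This reduces the optimization to a one-dimensional problem: how to split the fixed budget $K$ between the probe layer and the hub layer. Solving this single-variable maximization in closed form, and dividing the resulting bound into the adaptive bound, produces the claimed constant $\frac{e^2-2}{e-1}$ in the limit.

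The step I expect to be the main obstacle is justifying the symmetrization/averaging reduction for the non-adaptive upper bound: one must show that no asymmetric allocation, and no seed placement at ``mid-layer'' nodes such as internal tree nodes or individual leaves, can beat the symmetric probe/hub split. The cleanest route is an exchange argument that symmetrizes any non-adaptive seed set across branch permutations without decreasing its expected influence, after which the problem collapses to a one-variable calculus exercise. Once this reduction is in hand, the explicit adaptive policy, the linearity-of-expectation calculation, and the $(1-1/n)^n \to 1/e$ limit combine routinely to yield the constant $\frac{e^2-2}{e-1}$.
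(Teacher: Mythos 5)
Your proposal is a plan rather than a proof, and the gap is not just one of detail: the specific constant $\frac{e^2-2}{e-1}$ is never actually derived from your construction, and there is no reason to believe your construction produces it. The paper's proof uses a completely different and much more rigid gadget: a single directed line $v_1 \to v_2 \to \cdots \to v_{2N+1}$ with every edge having probability $p = 1-1/N$ (near-certain, not $\Theta(1/n)$), budget $K=2$, and time horizon $T=2N$. There, the adaptive policy seeds $v_1$, watches the cascade run down the line until it dies, and then re-seeds at the first inactive node; the key point is that the second seed ``restarts'' the geometric survival process, which is what produces the $(1-1/N)^{2N}\to e^{-2}$ terms in the adaptive influence, against $(1-1/N)^N \to e^{-1}$ terms for the best non-adaptive pair $\{v_1, v_N\}$. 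The constant in the lemma comes precisely from this $e^{-2}$-versus-$e^{-1}$ contrast on a path. Your probe/hub construction with edge probabilities $\Theta(1/n)$ and independent symmetric branches is the standard template for adaptivity-gap bounds of the form $e/(e-1)$ (one factor of $1-1/e$ from the probes firing), and your own summary of the adaptive bound, ``a constant of the form $\alpha+\beta(1-1/e)$,'' contains no $e^{-2}$ term; nothing in the sketch explains how the ratio of two such expressions would come out to $\frac{e^2-2}{e-1}$ rather than some other constant. Since the lemma asserts a specific numerical bound, leaving the closed-form computation of both $A_{opt}$ and $N_{opt}$ unexecuted means the statement is not established.

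A second, independent gap is the one you flag yourself: the upper bound on the optimal non-adaptive policy. You propose a symmetrization/exchange argument over branch permutations but do not carry it out, and you correctly note that one must also rule out seed placements at mid-layer nodes. In the paper this step is essentially trivial because the instance is a line: the activation probability of $v_i$ depends only on its distance to the nearest seed, so a short exchange argument shows $\{v_1, v_N\}$ is optimal among all two-element seed sets. This is one of the advantages of the line construction --- the non-adaptive optimum can be identified exactly with almost no work --- whereas in your construction the non-adaptive upper bound is the hard part and remains open. If you want to salvage your approach, you would need to (i) fix all parameters explicitly, (ii) compute both expectations in closed form and take the limit, and (iii) prove the symmetrization claim; but even then you should expect to land on a different constant than the one in the lemma.
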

\begin{proof}
	The proof is inspired by the analysis in \cite{chen2019adaptivity} for the unconstrained case, while our problem involves a time constraint. For a certain integer $N \in \neqZ$, let us consider a directed line with $2N+1$ nodes with edges $(v_i,v_{i+1})$, $i \in \{1,...,2N\}$, where each edge has the probability $p=1-1/N$. Suppose that the time constraint is $2N$ and $K=2$. For each node $s_i$ with $i\leq N$, the expected influence resulted from $v_i$ within $t$ diffusion rounds is \[S(t)=\sum_{i=1}^{t}p^{i-1}\cdot (1-p)\cdot i+p^{t} \cdot (t+1)=\frac{1-p^t}{1-p}+p^t.\]
	
	Let us first consider an adaptive policy that (a) selects $v_1$ as the first node, (b) waits for the diffusion process terminates or the time limit is reached, and (c) selects the inactive node that is closest to $v_1$ as the second the seed node. The resulted influence would be
	
	\begin{align*}
	\Delta_{ad} &\define\sum_{i=1}^{2N-1}p^{i-1}(1-p)(i+S(2N-i))+p^{2N-1}(2N+1)\\
	&=2N(1-(1-\frac{1}{N})^{2N-1})-(2N-1)(1-\frac{1}{N})^{2N}\\
	&\hspace{4cm}+2(1-\frac{1}{N})^{2N-1}.
	\end{align*}
	
	For the non-adaptive case, the probability that a node $v_i$ can be activated is determined by the distance to it from the closest seed node. Therefore, the optimal non-adaptive influence should select $v_1$ and $v_{N}$ as the seed nodes, which follows from the fact that, supposing that another two nodes $v_{j_1}$ and $v_{j_2}$ were selected with $j_1 < j_2$, we could have a higher influence by first replacing $v_{j_1}$ by $v_1$ and then replacing $v_{j_2}$ by the mid node $v_N$.\footnote{Note that selecting $v_1$ and $v_{N+1}$ is another optimal non-adaptive policy.} Therefore, the optimal influence under a non-adaptive policy will be
	\begin{align*}
	\Delta_{non-ad} &\define \sum_{i=0}^{N-1}p^i+\sum_{i=0}^{N}p^i\\
	&=2N(1-(1-1/N)^N)+(1-1/N)^N.
	\end{align*}
	Now we have that $\frac{A_{opt}^{\I}}{N_{opt}^{\I}}$ is no less than $\frac{\Delta_{ad}}{\Delta_{non-ad}}$ of which the limit is $\frac{e^2-2}{e-1} \approx 3.14$.
\end{proof}

\section{Seeding Policy Design}
\label{sec: strategy}
In this section, we present several seeding policies. We first discuss the node selection rule and then design seeding policies based on different seeding patterns. Given the hardness in Sec. \ref{sec: theory}, we aim at the solutions that are (a) approximation solutions to the special cases in Remark \ref{remark: relate} and (b) effective heuristics for the general cases.

\subsection{Node Selection Rule}
\label{subsec: nodeselection}
When a seed set of a given size $k \in \mathbb{Z}^+$ is planned to be selected, the greedy rule is the most common method used in the existing studies. Supposing that $U$ is the current status,  we use
$g(U,S,t)$ to denote the expected number of active nodes after $t \in  \neqZ $ rounds following $U$ with $S \subseteq V$ being selected as the seed set. The local optimal solution would be 
\begin{equation}
\label{eq: greedy}
\argmax_{|S|=k} g(U,S,t).
\end{equation}
Computing the above equation is $NP$-hard, but the greedy rule, as shown in Alg. \ref{alg: greedy}, gives a $(1-1/e)$-approximation. Due to the $\#P$-hardness in computing $g(U,S,t)$, such a greedy rule is often implemented through stochastic optimization in which the key ingredients are (a) an unbiased estimator of the objective function $g(U,S,t)$ and (b) an estimate of the lower bound of $\argmax_{|S|=k} g(U,S,t)$. In the rest of this part, we will show how to obtain these ingredients.

An unbiased estimator of $g(U,S,t)$ can be obtained by the samples generated in the following sampling process. 
\begin{definition}[\textbf{RR-set}]
	\label{def: rr-set}
	Given the status $U$ and the remaining diffusion rounds $t$, an RR-set $\R$ is generated by:
	\begin{itemize}
		\item Step 1: select a node $v$ from $V$ uniformly at random.
		\item Step 2: simulate the diffusion process from $v$ in a \textit{reverse} direction in the manner of BFS. The simulation process terminates if (a) any node in $\dA(U)$ is encountered, (b) no node can be further reached, or (c) $t$ rounds of BFS has been executed. 
		\item Step 3. If the simulation terminates under the case (a) in Step 2, return $\R=V$ as the output. Otherwise, let $\R \subseteq V$ be the set of the nodes traversed during the simulation process, and return $\R$.
	\end{itemize}
\end{definition}
For each $V_1, V_2 \subseteq V$, we use 
\[
\I(V_1 \cap V_2)\define
\begin{cases}
1 &  \hspace{0mm} \hspace{-0.5mm} \text{if $V_1 \cap V_2 \neq \emptyset$} \\
0 & \hspace{0mm} \hspace{-0.5mm} \text{else } 
\end{cases}
\]
to denote if their intersection is empty, and for each $S \subseteq V$, let us consider the random variable $\I(S \cap \R)$. It turns out $n\cdot \I(S\cap \R)$ is an unbiased estimate of $g(U,S,t)$. 
\begin{lemma}
	\label{lemma: rrset}
	For each status $U$, $t \in \mathbb{Z}^+$, and $S \subseteq V$, we have 
	\begin{equation}
	\label{eq: lemma_rreset}
	n \cdot \E [\I(S\cap \R)]=g(U,S,t)
	\end{equation}
	where the expectation is taken over all possible $\R$ or equivalently over all the states of the edges.
\end{lemma}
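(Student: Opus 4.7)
The plan is to adapt the standard possible-world argument underlying RR-sets to the conditional setting here, where the diffusion is run from a given intermediate status $U$ and the influence is counted only within $t$ rounds. The starting point is a coupling reformulation of $g(U,S,t)$: a realization of the diffusion is equivalent to independently sampling the live/dead state of every edge in $E\setminus(\dL(U)\cup\dD(U))$ with probability $p_e$, fixing edges in $\dL(U)$ to live and those in $\dD(U)$ to dead, and then reading off the active nodes after $t$ rounds as a deterministic function of the resulting ``world.'' Under the semantics of the IC model, a node $v$ is active after $t$ rounds iff the world contains a live path of length at most $t$ from some node in $\dA(U)\cup S$ to $v$. Hence $g(U,S,t)=\sum_{v\in V}\Pr[\text{live path of length}\le t\text{ from }\dA(U)\cup S\text{ to }v]$.

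The next step is to switch to the reverse graph, since a forward live path of length $\le t$ from $\dA(U)\cup S$ to $v$ exists iff there is a live path of the same length from $v$ to $\dA(U)\cup S$ in the edge-reversed graph. I would then argue that the RR-set generation in Definition~\ref{def: rr-set} samples exactly this reverse reachability set via the principle of deferred decisions: revealing the live/dead state of an unobserved edge on demand during the reverse BFS yields the same joint distribution over edge states as sampling the whole world upfront, because the order of independent Bernoulli flips is immaterial; edges prescribed by $\dL(U)$ and $\dD(U)$ are simply honored as live and dead, respectively, during the traversal.

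With this setup, the heart of the argument is a case split on whether the reverse BFS from the starting node $v$ ever reaches a node in $\dA(U)$. In the first case it does, and the construction returns $\R=V$; in the forward direction, this is precisely the event that $v$ is reached from $\dA(U)$ via a live path of length $\le t$, so $v$ is activated with probability $1$ independently of $S$, and $\I(S\cap\R)=1$ matches. In the second case the BFS terminates without hitting $\dA(U)$, so $\R$ equals the set of nodes from which $v$ is reachable in the reverse graph via a live path of length $\le t$ avoiding $\dA(U)$; then $v$ is forward-activated iff some node of $S$ lies in $\R$, and again $\I(S\cap\R)$ matches the activation indicator of $v$. Combining the two cases, conditioned on sampling $v$ in Step~1 of the RR-set process, $\Pr[S\cap\R\neq\emptyset]$ equals the probability that $v$ is active after $t$ rounds; averaging over the uniformly random $v$ sampled by Step~1 gives $\E[\I(S\cap\R)]=\tfrac{1}{n}\,g(U,S,t)$, which is exactly (\ref{eq: lemma_rreset}) after multiplying by $n$.

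The main obstacle I anticipate is the deferred-decisions step: one must verify that truncating the reverse BFS both at depth $t$ and at nodes in $\dA(U)$ reproduces the distribution of reverse $\le t$-reachability in a freshly sampled world, while correctly respecting $\dL(U)$ and $\dD(U)$. A related subtlety is that once the BFS terminates early by hitting $\dA(U)$, many edges in the underlying world are never revealed; the claim implicit in the ``$\R=V$'' collapse rule is that this unrevealed randomness is irrelevant for the event $\{S\cap\R\neq\emptyset\}$, because the corresponding forward event---namely, $v$ being activated by $\dA(U)$---has already occurred with certainty along the traced path.
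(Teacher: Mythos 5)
Your argument is correct and follows essentially the same route as the paper's proof: decompose $g(U,S,t)$ over target nodes $v$, couple the diffusion with a fixed live-edge world, and verify per world that $\I(S\cap\R_v)$ coincides with the activation indicator of $v$, with the case split on whether the reverse traversal hits $\dA(U)$ handling the $\R=V$ rule. The only difference is that you make the deferred-decisions step explicit, whereas the paper conditions directly on a full outcome of edge states; this is a presentational refinement, not a different proof.
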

\begin{proof}
	For each $v \in V$, let $g_v(U,S,t)$ be the probability that $v$ can be active after $t$ rounds following $U$ when $S$ is selected as the seed set, and thus we have $g(U,S,t)=\sum_{v \in V}g_v(U,S,t)$ due to the linearity of expectation. On the other hand, let $\R_v$ be the random RR-set when $v$ is selected in Step 1 in Def. \ref{def: rr-set}. By the sampling process, we have $\E [\I(S \cap \R)]=\sum_{v \in V}\frac{\E [\I(S \cap \R_v)]}{n}$ for each $v$. Therefore, it suffices to prove $\E [\I(S \cap \R_v)]=g_v(U,S,t)$. Since the expectation is taken over all the possible states of the edges, it further suffices to show that $\I(S \cap \R_v)=g_v(U,S,t)$ in each possible outcome of the edge states.\footnote{There are totally $2^{|E|}$ possible outcomes.} Now suppose that the states of the edges are fixed, and consequently $\I(S \cap \R_v)$ and $g_v(U,S,t)$ are binary-valued. According to the diffusion process of IC model, $g_v(U,S,t)=1$ iff there is a path of at most $t$ live edges  from $S \cup \dA(U)$ to $v$. According to the sampling process in Def. \ref{def: rr-set}, $\I(S \cap \R_v)=1$ iff $\dA(U) \cap \R_v \neq \emptyset$ (case (a) in Step 2) or $S \cap \R_v \neq \emptyset$ (case (b) and (c) in Step 2), which is $(\dA(U)\cup S) \cap \R_v \neq \emptyset$. Since $\R_v$ contains exactly the nodes that have a path of at most $t$ live edges to $v$, we have $g_v(U,S,t)=1$ iff $\I(S \cap \R_v)=1$.
\end{proof}

\begin{algorithm}[t]
	\caption{SOF Policy}\label{alg: t-rr}
	\begin{algorithmic}[1]
		\State \textbf{Input} {the current status $U$ and the remaining diffusion rounds $t$}
		\State Step 1: Uniformly select a random node that are node active in $U$;
		\State $k^*= \argmax_{i} \overline{\beta}(U,k,t, i)$;
		\State Return $\gr(U,k^*,t)$;
	\end{algorithmic}
\end{algorithm}

Suppose that a collection $\R_l$ of $l$ RR-sets were randomly generated, Lemma \ref{lemma: rrset} immediately implies that when $l$ is sufficiently large, the $S$ that can maximize 
\begin{equation}
\label{eq: estimate}
\frac{\sum_{\R \in \R_l} n \cdot \I(S\cap \R)}{l}
\end{equation}
should be able to maximize $g(U,S,t)$.  We can easily see that Eq. (\ref{eq: estimate}) is submodular with respect to $S$ for each $\R_l$, and therefore, greedy algorithm gives a $(1-1/e)$-approximation to $\argmax_{|S|=k}\frac{\sum_{\R \in \R_l} n \cdot \I(S\cap \R)}{l}$.  Using the central inequality (e.g. Chernoff bound) to bound the estimation accuracy requires the lower bound of $\argmax_{|S|=k} g(U,S,t)$, which is however not known to us in advance. Fortunately, because $g(U,S,t)$ is bounded within $[|\dA(U)|,n]$ and its estimate Eq. (\ref{eq: estimate}) can be approximated within a factor of $1-1/e$, we could utilize the adaptive sampling method designed in \cite{tang2014influence} to search a lower bound that is within a constant factor to $\argmax_{|S|=k} g(U,S,t)$. Following the standard reverse sampling analysis, we have the following result.
\begin{lemma}[\hspace{1sp}\cite{tang2014influence}]
	\label{lemma: greedy}
	There exists a greedy algorithm that can produce a $(1-1/e-\epsilon)$-approximation to Eq. (\ref{eq: greedy}) with probability at least $1-n^{-l}$ within time $O(\frac{(k+l)(m+n)\log n}{\epsilon^2})$, for each $\epsilon \in (0,1)$ and $l\geq 1$. 
\end{lemma}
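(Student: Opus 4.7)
The plan is to adapt the reverse-sampling framework of Tang, Xiao, and Shi to the time-constrained, status-dependent RR-sets of Def. \ref{def: rr-set}. The three classical ingredients are already in place: an unbiased estimator (Lemma \ref{lemma: rrset}), submodularity of the empirical objective, and a concentration bound that links the empirical and true maxima. My job is to verify that each ingredient still works when the RR-set generation is truncated at $t$ BFS rounds and is terminated early upon hitting $\dA(U)$.

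First, for any fixed collection $\R_l=\{\R_1,\dots,\R_l\}$ of independent RR-sets, the empirical coverage function $S \mapsto \sum_{\R \in \R_l} \I(S\cap \R)$ is a nonnegative linear combination of coverage indicators, hence monotone and submodular in $S$. Thus, by Nemhauser-Wolsey-Fisher, the standard cardinality-constrained greedy algorithm returns a set $S^g$ with $|S^g|=k$ satisfying $\sum_{\R\in\R_l}\I(S^g\cap\R)\geq (1-1/e)\max_{|S|=k}\sum_{\R\in\R_l}\I(S\cap\R)$, and this step takes $O(k\,(m+n))$ time after the RR-sets have been stored. By Lemma \ref{lemma: rrset} and a multiplicative Chernoff bound followed by a union bound over the at most $\binom{n}{k}$ candidate seed sets, one can show that whenever the number of samples satisfies $|\R_l|\geq C\cdot\frac{n\,(k\log n + l\log n)}{\epsilon^2\cdot\mathrm{OPT}}$ for a suitable absolute constant $C$, where $\mathrm{OPT}=\max_{|S|=k}g(U,S,t)$, the greedy solution $S^g$ satisfies $g(U,S^g,t)\geq (1-1/e-\epsilon)\,\mathrm{OPT}$ with probability at least $1-n^{-l}$. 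This is exactly the chain of inequalities used in Lemmas~2-4 of \cite{tang2014influence}, and it is insensitive to how the RR-sets are produced as long as the unbiasedness in Eq.~(\ref{eq: lemma_rreset}) holds.

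The main obstacle, as in the original TIM analysis, is that $\mathrm{OPT}$ is unknown in advance, so the required sample size cannot be fixed a priori. To overcome this I would invoke verbatim the adaptive lower-bound search of \cite{tang2014influence}: generate RR-sets in geometrically increasing batches and, after each batch, use the current empirical greedy value to either certify a constant-factor lower bound $L$ on $\mathrm{OPT}$ or double the budget and continue. With probability at least $1-n^{-l}$ this procedure terminates once $|\R_l|=\Theta\!\left(\frac{n\,(k+l)\log n}{\epsilon^{2}\,\mathrm{OPT}}\right)$.

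Finally, I would account for the running time. Each RR-set is produced by a reverse BFS that visits every edge at most once and is additionally truncated after $t$ rounds; thus the per-sample cost is $O(m+n)$ in the worst case, and in expectation at most $O\!\bigl((m+n)\,\mathrm{OPT}/n\bigr)$ by the standard argument that the expected traversal size of a random reverse BFS is proportional to the forward influence. Multiplying this expected per-sample cost by the number of samples above collapses the $\mathrm{OPT}$ factors and yields the claimed total running time $O\!\bigl(\tfrac{(k+l)(m+n)\log n}{\epsilon^{2}}\bigr)$. Combining the three pieces -- greedy submodular maximization on $\R_l$, Chernoff-plus-union-bound transfer to the true objective $g(U,\cdot,t)$, and the adaptive lower-bound search -- gives the stated approximation ratio, success probability, and time bound, which is the statement of the lemma.
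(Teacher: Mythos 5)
Your proposal is correct and follows essentially the same route as the paper: the paper's proof simply states that the result follows from the analysis in Sec.~3 of \cite{tang2014influence} with the modified RR-set of Def.~\ref{def: rr-set}, and your write-up is a faithful expansion of exactly that argument (submodularity of the empirical coverage, Chernoff plus union bound via the unbiasedness in Lemma~\ref{lemma: rrset}, the adaptive lower-bound search for $\mathrm{OPT}$, and the cancellation of $\mathrm{OPT}$ between sample count and per-sample cost). The only thing you add beyond the paper is the explicit verification that truncation at $t$ rounds and early termination upon hitting $\dA(U)$ do not disturb any of these ingredients, which is a reasonable elaboration rather than a different approach.
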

\begin{proof}
	The proof follows directly from the analysis in Sec. 3 of \cite{tang2014influence} with the only difference that the new RR-set defined in Def. \ref{def: rr-set} is used.
\end{proof}
We use $\gr(U,t,k)=\{v_1,...,v_k\}$ to denote the output of greedy algorithm in Lemma \ref{lemma: greedy}, and we assume that the indexes follow the order in which the nodes were selected (e.g., $v_1$ was the first node added to the solution). In this paper, we will utilize this algorithm for node selection, enabling us to focus primarily on designing seeding patterns. One plausible reason for doing so is that, for the special cases discussed in Remark \ref{remark: relate}, the greedy node selection rule is the best polynomial method in terms of the approximation ratio.  

For the convenience of discussion, we introduce the following definitions that will be used in the rest of this section.

\begin{definition}[\textbf{Status} $U_{t,k}$]
	\label{def: utk}
	For a status $U$ and two integers $t,k \in \neqZ$, we use $U_{t,k}=\big(\dA(U)\cup \gr(U,t, k), \dL(U), \dD(U)\big)$ to denote the status when $\gr(U,t, k)$ is selected as the seed set.
\end{definition}

\begin{definition}[\textbf{Future Status} $\U_t(U)$]
	\label{def: utu}
	For a status $U$,  we use $\U_t(U)$ to denote the set of all possible status after $t \in \neqZ$ diffusion rounds following $U$ without selecting any new seed node, and we use $\Dis({\U}_t(U))$ to denote the associated distribution over $\U_t(U)$.
\end{definition}

\begin{algorithm}[t]
	\caption{Greedy Node Selection Rule$(U, t, k)$}
	\label{alg: greedy}
	\begin{algorithmic}[1]
		\State \textbf{Input:} $(U, t, k)$;
		\State \textbf{Output:} a user set $S$;
		\State $S \leftarrow \emptyset$;
		\For {${i=1:k}$}
		\State $v^* \leftarrow \argmax_v g(U, S+v, t)$;
		\State $S \leftarrow S+v^*$;
		\EndFor
		\State Return $S$;
	\end{algorithmic}
\end{algorithm}

\subsection{Seeding Policy}

If we would always use the greedy rule for node selection, the problem left is to decide the seeding pattern: given a status, how many seed nodes should be selected in each seeding step? As aforementioned, in each seeding step, we would select seed nodes as few as possible to maximally utilize the merits of adaptive seeding, while we would select seed nodes as many as possible to have more future diffusion rounds by the time constraint. With such intuitions in mind, we propose five seeding policies.

\subsubsection{{Basic Seeding Policy}} In general, a seeding pattern can be either specified before the seeding process, or dynamically constructed during the seeding process. For the TAIM problem, an immediate solution is to utilize a \textit{static seeding pattern}, which is given by a sequence $(a_1,..,a_T)$ of non-negative integers where $a_i \in \neqZ$ is the size of the seed set in the $i$-th seeding step. Under the budget constraint, we have $\sum a_i \leq K$. Combining with the greedy node selection rule, we have a basic policy:
\begin{policy} [\textbf{Static Policy} $(a_1,...,a_T)$]
	\label{policy: static}
	In the $i$-th seeding step with state $(U, t, k)$, select $\gr(U,t,a_i)$ as the seed set.
\end{policy} 

A static policy is relatively simple to implement, but one drawback is that we have less knowledge on finding the sequence $(a_1,..,a_T)$ so that the influence can be maximized. Note that the searching space is exponential, and the hardness in finding the optimal static pattern can be additionally seen from the proof of Lemma \ref{lemma: hardness_2}. As a preliminary solution, we propose the \textit{$k$-filter uniform pattern} where the seeding actions are uniformly distributed to the diffusion period, and each seeding step selects the same number of seed nodes. Formally, given the filter size $k \in \neqZ$, we aim to achieve the pattern where $a_{1}=a_{1+k}=...=a_{1+(d-1)*k}=\lfloor K/d \rfloor$ with $d=\lfloor T/k \rfloor$. For example, under $T=10$, $K=50$ and $k=2$, we have the pattern $(10,0,10,0,10,0,10,0,10 ,0)$. In the case that we have the remaining budget due to the rounding, we will use them right before the last diffusion round. 


The static policy cannot fully utilize the merits of the adaptive policy as the seeding patterns are fixed. In the next, we present the \textit{greedy seeding pattern} in which we keep observing the diffusion process until a final status is reached or no diffusion round is left. More specifically, when there are more than one diffusion rounds left, we select one seed node if the current status is final, and otherwise wait for more results; when there is only one diffusion round left, we immediately use up the remaining budget. 

\begin{policy}[\textbf{Greedy Policy}]
	\label{policy: greedy}
	In each seeding step with state $(U, t, k)$:
	\begin{itemize}
		\item If $t=1$, select $\gr(U,t,k)$ as the seed set.
		\item If $t>1$ and $U$ is final, select $\gr(U,t,1)$  as the seed set.
		\item If $t>1$ and $U$ is not final, no seeding action is performed.
	\end{itemize}
\end{policy} 

Compared to the static seeding pattern, the greedy policy does not require any input prior to the seeing process, and it always attempts to obtain the diffusion results maximally. However, because it does not consider the time constraint until the last diffusion round, it might be the case that the majority of the seed nodes are used right before the last diffusion round and thus have very limited time to spread widely, which can be observed later in the experiments in Sec. \ref{sec: exp}. 

\subsubsection{{One-step Foresight Seeding Policy}} Further generalizing the greedy seeding pattern, we propose the \textit{one-step foresight seeding pattern} in which we will estimate the profit of selecting a particular number of seed nodes in each step. Given the state $(U,k,t)$, we consider the scenario that $k_1\leq k$ nodes are first selected in the current step and $k_2=k-k_1$ nodes will be selected after one round of diffusion. When $k_1$ nodes are selected in the current seeding step by the greedy node selection rule, the optimal profit, denoted as $\beta(U,k, t,k_1)$, will be 

\begin{align}
\label{eq: beta}
&\beta(U,k, t,k_1) \nonumber\\
&\define \sum_{U^* \in {\U}_1(U_{t, k_1})}\Pr[U^*|U_{t, k_1}] \cdot  \max_{|S|=k-k_1} g(U^*,S,t-1)\nonumber \\
&=\E_{U^* \sim \Dis({\U}_1(U_{t, k_1}))} \Big[\max_{|S|=k-k_1} g(U^*,S,t-1)\Big]
\end{align}
where $\Pr[U^*|U_{t, k_1}]$ is the probability that $U^*$ happens conditioned on $U_{t, k_1}$, and $U_{t, k_1}$ and  ${\U}_1(U_{t, k_1})$ are given by Defs. \ref{def: utk} and \ref{def: utu}. Consequently, the optimal $k_1$ under this pattern is 
\begin{equation}
\label{eq: time_size}
K(U,t, k)\define\argmax_{k_1\leq k} \beta(U,k, t,k_1),
\end{equation}
based on which we design the One-step Foresight (\textbf{OF}) policy:

\begin{policy}[\textbf{One-step Foresight (OF) Policy}]
	In each seeding step with state $(U, t, k)$, select $\gr(U,t,k^*)$ as the seed set where $k^*=K(U,t,k)$.
\end{policy}

While the OF policy only considers the one-step forward foresight, it indeed provides the best possible solution for all the special cases mentioned in Remark \ref{remark: relate}. 
\begin{lemma}
	\label{lemma: OF}
	The OF policy provides a $(1-1/e)$-approximation if either (a) $T=1$, (b) $T=\infty$ or (c) $p_e=1$ for each edge $e \in E$.
\end{lemma}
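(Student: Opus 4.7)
My plan is to treat the three special cases separately and show in each that the OF policy reduces to, or matches the performance of, a known $(1-1/e)$-approximation procedure; the $(1-1/e)$ guarantee of greedy selection from Lemma \ref{lemma: greedy} ultimately supplies the ratio. For case (a) $T=1$, I would first simplify $\beta(U,k,1,k_1)$: since zero diffusion rounds remain after the second batch $S$ is chosen, $\max_{|S|=k-k_1}g(U^*,S,0)=|\dA(U^*)|+(k-k_1)$, so $\beta(U,k,1,k_1)=g(U,\gr(U,1,k_1),1)+(k-k_1)$. The greedy routine of Alg.~\ref{alg: greedy} is nested and each newly added seed contributes marginal gain at least one (it activates itself), so the map $k_1\mapsto g(U,\gr(U,1,k_1),1)-k_1$ is non-decreasing. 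Hence $\beta$ is maximized at $k_1=k$, giving $K(U,1,k)=k$, and OF returns $\gr(U,1,k)$, a $(1-1/e)$-approximation of single-round IM, which coincides with the TAIM optimum when $T=1$ because no adaptivity is available.

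For case (c) $p_e=1$ the diffusion is deterministic, so the optimum is non-adaptive and the expectation in $\beta$ collapses. In this regime $\beta(U,k,t,k)$ equals the $t$-hop reach of the $k$ greedy seeds starting from $\dA(U)$, whereas $\beta(U,k,t,k_1)$ for $k_1<k$ assigns $t$ hops of reach only to the first $k_1$ greedy seeds and at most $t-1$ hops to the remaining $k-k_1$ seeds. Because hop-reach is monotone in the number of rounds, combining this with the $(1-1/e)$ guarantee of greedy for the monotone-submodular $t$-round influence objective, I would argue that the all-greedy branch dominates, so $K(U,T,K)=K$, and OF outputs $\gr(U,T,K)$ at the first seeding step. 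This matches non-adaptive greedy and inherits the $(1-1/e)$-approximation for monotone-submodular maximization under a cardinality constraint.

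For case (b) $T=\infty$ I would argue OF coincides with the full-adoption feedback policy paired with a greedy node selector, which is $(1-1/e)$-optimal by \cite{golovin2011adaptive}. Two sub-claims suffice: (i) when $U$ is not final, $K(U,\infty,k)=0$; a coupling argument shows that postponing the entire batch of $k$ seeds by one round cannot hurt, since the unbounded horizon absorbs the one-round delay while the extra round of observation is informationally free; and (ii) when $U$ is final, $K(U,\infty,k)=1$; waiting is useless, so $K\geq 1$, and adaptive submodularity of the full-adoption feedback objective \cite{golovin2011adaptive} implies that $k_1=1$ dominates every larger batch. Chaining (i) and (ii), OF alternates between waiting until $U$ becomes final and then selecting a single greedy seed, which reproduces full-adoption + greedy exactly.

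The hardest step will be case (b): rigorously establishing $K(U,\infty,k)\in\{0,1\}$ requires a careful coupling together with an adaptation of the adaptive-submodularity analysis of \cite{golovin2011adaptive} to our one-step-foresight objective $\beta$, and one must interpret $T=\infty$ as a limit of sufficiently large finite $T$ since the seeding process is literally defined only for finite $T$. A secondary subtlety arises in case (c): since greedy is only a $(1-1/e)$-approximator of the inner max, the dominance $\beta(U,k,t,k)\geq\beta(U,k,t,k_1)$ cannot rely on hop-reach monotonicity alone and must be combined with the greedy guarantee to absorb the extra slack left by the $k-k_1$ ``optimal'' seeds that appear inside $\beta$.
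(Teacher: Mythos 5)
Your proposal is correct and follows essentially the same route as the paper: a case split in which you show that $K(U,t,k)=k$ for $T=1$ and for the deterministic model, and that $K(U,\infty,k)$ is $0$ on non-final statuses and $1$ on final ones so that OF reproduces the full-adoption feedback model with greedy selection, after which the $(1-1/e)$ ratio is inherited from the greedy node selection guarantee. The two subtleties you flag --- treating $T=\infty$ as a limit, and the fact that the dominance $\beta(U,k,t,k)\geq\beta(U,k,t,k_1)$ in the deterministic case must absorb the slack from the exact inner maximum over the remaining $k-k_1$ seeds --- are left implicit in the paper's own proof as well, so your version is, if anything, slightly more careful on those points.
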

\begin{proof}
	First, it produces a $(1-1/e)$-approximation when $T=1$. This is because (a) $K(U,t, k)=k$ when $t=1$ and (b) the nodes are selected by the greedy node selection rule. 
	
	Second, it is a $(1-1/e)$-approximation when $t=\infty$. First, at a certain seeding step when the status $U$ is not final, we always have $\beta(U,k,\infty,i)\leq \beta(U,k,\infty,0)$ for each $i\leq k$, which follows from the fact that waiting for the diffusion to complete is always optimal if there is no time constraint. Therefore, the OF policy will always wait for the diffusion to reach a final status before selecting the next seed set. Second, at a certain seeding-step when the status $U$ is final, we have $\beta(U,k,\infty,i)\leq \beta(U,k,\infty,1)$ for each $i \leq k$, and therefore we have $K(U, k, \infty)=1$, which means we always wait for the diffusion to complete and always select one seed node whenever a seed set should be selected. As a result, OF follows the full-adoption feedback model and selects each seed node in a greedy manner, which yields a $(1-1/e)$-approximation. 
	
	Finally, under the deterministic independent cascade model, any adaptive policy is in fact non-adaptive as the diffusion process has no uncertainty, so we must have $\beta(U,k,t,k)\geq \beta(U,k,t,i)$ for each $i \leq k$. Combining the greedy node selection rule, OF will again give a $(1-1/e)$-approximation.
\end{proof}
One can see that either the static policy (Policy \ref{policy: static}) or the greedy policy (Policy \ref{policy: greedy}) cannot always guarantee the same for those special cases. While the OF policy can take account of the time constraint in each seeding step, it is not practically feasible in terms of the computability because $\beta(U,k, t,k_1)$ is hard to compute due to the fact that (a) computing $\max_{|S|=k-k_1} g(U^*,S,t-1)$ is $NP$-hard and (b) there can be an exponential number of terms in ${\U}_1(U_{t, k_1})$. To deal with the first issue, we use the $(1-1/e)$-approximation as an estimate of $\argmax_{|S|=k-k_1} g(U^*,S,t-1)$, and therefore, the quantity we are interested in is 
\begin{align*}
\label{eq: over_beta}
&\overline{\beta}(U,k,t,k_1) \define\\
&\E_{U^* \sim \Dis({\U}_1(U_{t, k_1}))} \Big[g(U^*,\gr(U^*,t-1, k-k_1),t-1)\Big].
\end{align*}
For the second issue, we can estimate $\overline{\beta}(U,k,t,k_1)$ through sampling. In particular, given the input $(U,t, k)$ and $k_1$, the estimation can be obtained through samples generated by the following procedure:
\begin{enumerate}
	\item obtain $U_{t,k_1}$ using $\gr(U,t,k_1)$,
	\item sample a status $U^*$ following $\Dis({\U}_1(U_{t, k_1}))$ by simulating the diffusion process for one round, and,
	\item compute $g(U^*,\gr(U^*,t-1,k-k_1),t-1)$ as an estimate of $\overline{\beta}(U, k,t,k_1)$.  
\end{enumerate}
Supposing $L$ simulations are used for each estimation, the resulted policy is shown in Alg \ref{alg: OF policy}, denoted as the Sampling-enhanced One-step Foresight (\textbf{SOF}) policy.

\begin{policy}[\textbf{Sampling-enhanced One-step Foresight (SOF) Policy} $(L \in \neqZ)$]
	\label{policy: SFOM}
	In each seeding step with state $(U, t, k)$, select the seed set obtained by running Alg. \ref{alg: OF policy} with input $(U, t, k)$ and $L$.
\end{policy} 

\begin{algorithm}[t]
	\caption{SOF Policy}\label{alg: OF policy}
	\begin{algorithmic}[1]
		\State \textbf{Input} {$(U, k, t)$ and $L$}
		\For {$i=0:k$}
		\State Estimate $\overline{\beta}(U,k,t, i)$ by $L$ simulations;
		\EndFor
		\State $k^*= \argmax_{i} \overline{\beta}(U,k,t, i)$;
		\State Return $\gr(U,k^*,t)$;
	\end{algorithmic}
\end{algorithm}

Comparing to the static policy and greedy policy, the SOF policy is more sophisticated, but it incurs a higher complexity because we have to invoke the greedy node selection rule in each sampling. In our experiments, we have observed that the SOF policy is not scalable to handle large datasets. 

\subsubsection{Fast Foresight Seeding Policy}
For supporting high-volume datasets, we finally present a fast foresight seeding policy. Our design is driven by considering a fine-grained trade-off between seeding and observing. Given the state $(U,t,k)$ with $\gr (U,t,k)=\{v_1,...,v_k\}$ being the local greedy solution and $S_i\define \{v_1,...,v_i\}$, our method considers the node one by one from $v_1$ to $v_k$ and determines if they would be selected in the current solution. This framework is formally described in Alg. \ref{alg: framework} where $\ifadd(U,S_i,v_{i+1},t) \in \{\true,\false\}$  is a module for determining if $v_{i+1}$ should be selected given that $S_i$ has already been selected. Once a node $v_{i+1}$ is rejected by $\ifadd()$, the process terminates and takes $S_i$ as the seed set in the current seeding step. Such a framework enables us to concentrate on analyzing the marginal effect of adding an individual node - designing the $\ifadd()$ module. To this ends, we propose a novel and efficient $\ifadd()$ designed through two quantities, $\Ma(U,S_i,v_{i+1}, t) \in [0,1]$ and $\Mt(U,S_i,v_{i+1}, t) \in [0,1]$, which reveals the gain or loss of selecting or not selecting $v_{i+1}$. In particular, when $\Ma(U,S_i,v_{i+1}, t)$ or $\Mt(U,S_i,v_{i+1}, t)$ approaches $1$, it is a strong indicator for selecting $v_{i+1}$ as another seed node. When $\Ma(U,S_i,v_{i+1}, t)$ or $\Mt(U,S_i,v_{i+1}, t)$ is close to $0$, it is a strong indicator for not selecting $v_{i+1}$. In what follows, we present these two metrics in detail.

\begin{algorithm}[t]
	\caption{Fast Seeding Policy Framework}\label{alg: framework}
	\begin{algorithmic}[1]
		\State \textbf{Input} {$(U, k,t)$}
		\State $\{v_1,...,v_k\} \leftarrow \gr(U,k,t)$;
		\State $S_i \leftarrow \{v_1,...,v_i\}$;
		\For {$i=0:k$}
		\If {$\ifadd(U, S_{i}, v_{i+1}, t) == \false$}
		\State Return $S_i$;
		\EndIf
		\EndFor
		\State Return $S_k$;
	\end{algorithmic}
\end{algorithm}

The first metric $\Ma()$ is designed by measuring the correlation between the influence triggered by different seed nodes. Due to the diminishing marginal return of influence, the marginal contribution of a node will decrease after other nodes have been selected, which is an essential reason that a higher influence can be achieved by allowing an adaptive seeding. However, if the influences resulted from different seed nodes were independent, observing the feedback is not useful, and consequently an adaptive policy only incurs the loss of diffusion rounds. For example, if $v_1$ and $v_2$ are in different connected components of the graph, observing the influence resulted from $v_1$ does not alter the capability of $v_2$ in terms of influencing other users. From this perspective, the next lemma gives a sufficient condition for testing such independence.

\begin{lemma}
	\label{lemma: sufficient}
	For a seeding step with state $(U,t,k)$ and a seed set $S^* \subseteq V$, observing the cascade resulted by any subset $S_1 \subseteq S^*$ will not alter the marginal contribution of $S^* \setminus S_1$, provided that 
	\begin{equation}
	\label{eq: sufficient}
	g(U, S^*, t)-g(U,\emptyset, t)=\sum_{v\in S^*}(g(U, v, t)-g(U, \emptyset, t)).
	\end{equation}
\end{lemma}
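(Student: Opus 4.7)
The plan is to lift Eq. (\ref{eq: sufficient}) from an expectation over the IC randomness to a pointwise statement over edge realizations via the live-edge coupling, and then use the resulting structural decomposition to show that observations carried out along the way are non-informative for the remaining seeds.

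First, I would adopt the live-edge representation used in the proof of Lemma \ref{lemma: rrset}: fix a realization $\omega$ of the independent Bernoulli coins on $E$, and let $R^t_\omega(X)\subseteq V$ denote the set of nodes reachable from $X$ by a live path of length at most $t$ under $\omega$. Then $g_v(U,S,t)=\E_\omega\I[v\in R^t_\omega(\dA(U)\cup S)]$ and the pointwise union-bound inequality
\begin{equation*}
\I[v\in R^t_\omega(\dA(U)\cup S^*)]-\I[v\in R^t_\omega(\dA(U))] \leq \sum_{u\in S^*}\bigl(\I[v\in R^t_\omega(\dA(U)\cup\{u\})]-\I[v\in R^t_\omega(\dA(U))]\bigr)
\end{equation*}
holds for every $v$ and $\omega$. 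Summing over $v$ and taking expectation recovers the inequality $g(U,S^*,t)-g(U,\emptyset,t)\leq\sum_{u\in S^*}(g(U,u,t)-g(U,\emptyset,t))$, so the hypothesis (\ref{eq: sufficient}) forces equality almost surely pointwise. Equivalently, for almost every $\omega$, the ``new-reach'' sets $R^t_\omega(\dA(U)\cup\{u\})\setminus R^t_\omega(\dA(U))$ for $u\in S^*$ are pairwise disjoint.

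Second, I would cash in this pointwise disjointness. Suppose we select $S_1\subseteq S^*$, run the diffusion for some number of rounds, and observe a post-diffusion status $U'$; the marginal contribution in question is $\E_{U'}[g(U',S^*\setminus S_1,t')-g(U',\emptyset,t')]$ where $t'$ is the number of rounds remaining. Because the reach cones of $S_1$ and of $S^*\setminus S_1$ (modulo $\dA(U)$) are a.s.\ disjoint in every realization, the edges whose states get revealed while the $S_1$-cascade unfolds are contained in the reach cone of $S_1$ and therefore, for a.e.\ $\omega$, do not lie on any live path contributing to $R^{t'}_\omega(\dA(U)\cup S^*\setminus S_1)\setminus R^t_\omega(\dA(U))$. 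Using Eq.~(\ref{eq: sufficient}) once more to express the full marginal as a modular sum, the conditional expectation given the observation equals the unconditional $\sum_{u\in S^*\setminus S_1}(g(U,u,t)-g(U,\emptyset,t))$, proving that the marginal contribution of $S^*\setminus S_1$ is unaltered.

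The main obstacle I anticipate is the second step: the partial observation may technically reveal edges that are incident to nodes lying in $S^*\setminus S_1$'s reach cone in some ``counterfactual'' realizations, even though they do not lie in it under the realized $\omega$. I expect to handle this by partitioning the edge set according to which seed's reach cone an edge belongs to --- a partition that is well-defined up to a measure-zero exception by the pointwise disjointness just established --- and by noting that the Bernoulli coins governing edges outside $S_1$'s cone are independent of those inside it, so conditioning on revealed edges in $S_1$'s part leaves the distribution of reach from $S^*\setminus S_1$ unchanged.
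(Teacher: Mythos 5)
Your first step coincides with the paper's: both lift Eq.~(\ref{eq: sufficient}) to a pointwise equality over live-edge realizations via the union-bound/submodularity inequality, concluding that the per-seed ``new-reach'' sets are disjoint realization by realization. For the second step, the obstacle you flag at the end (the partition of edges into per-seed reach cones being only ``well-defined up to a measure-zero exception'') is real, and the paper resolves it with a move you stop just short of making: apply the pointwise equality to the \emph{specific} realization in which every unobserved edge (every edge of $E \setminus (\dD(U)\cup \dL(U))$) is live --- a realization of positive probability since all $p_e>0$ --- which upgrades the almost-sure disjointness of realized reach sets to a \emph{deterministic, topological} statement: no inactive node is reachable from two distinct seeds of $S^*$ by a path of at most $t$ unobserved edges. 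This makes the partition of the relevant unobserved edges by seed exact rather than almost-sure, so the cascades from $S_1$ and from $S^*\setminus S_1$ are supported on disjoint, independently-sampled edge sets and the conditioning argument becomes immediate. With that one extra specialization your argument closes the gap and is essentially the paper's proof; without it, the claim that ``the coins outside $S_1$'s cone are independent of those inside'' presupposes a deterministic cone that you have not yet constructed.
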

\begin{proof}
Let us use $\E_{G_l \sim\G_U}[g_u(U, S^*, t|G_l)]$ to denote the probability that a node $u$ can be activated after $t$ rounds following $U$ when $S^*$ is selected, where $G_l$ is a sampled live-edge graph conditioned on $U$ and  $g_u(U, S^*, t|G_l) \in \{0,1\}$ is an indicator function denoting if there exists a path in $G_l$ from $A(U)\cup S^*$ to $v$ with no more than $t$ live edges. Due to the linearity of expectation, we have $g(U, S^*, t)=\sum_u \E_{G_l\sim\G_U}[g_u(U, S^*, t|G_l)]$. Because $g_u(U, S^*, t|G_l)$ is submodular, $g_u(U, S^*, t|G_l)-g_u(U, \emptyset, t|G_l)$ is no larger than $\sum_{v\in S^*}(g_u(U, v, t|G_l)-g_u(U, \emptyset, t|G_l))$.
Combining Eq. (\ref{eq: sufficient}), it implies that 
\begin{align}
\label{eq: equal}
g_u(U, S^*, t|G_l)&-g_u(U, \emptyset, t|G_l) \nonumber \\
&= \sum_{v\in S^*}(g_u(U, v, t|G_l)-g_u(U, \emptyset, t|G_l))
\end{align}
holds for each $G_l$ and $u$. Therefore, whenever the LHS of Eq. (\ref{eq: equal}) is equal to 1, there must be exactly one term on the RHS of Eq. (\ref{eq: equal}) is equal to 1, Taking $G_l$ as the live graph where all the edges in $E \setminus (\dD(U)\cup \dL(U))$ are live, it implies that any inactive node cannot be connected to two nodes in $S^*$ through a path in $E \setminus (\dD(U)\cup \dL(U))$ with edges less or equal than $t$. As a result, the cascades triggered by the nodes in $S^*$ are totally independent, and therefore observing the cascade resulted by any subset $S_1 \subseteq S^*$ does not change the marginal gain of any node in $S^* \setminus S_1$.
\end{proof}

The intuition behind Lemma \ref{lemma: sufficient} is that the overlapping of the contributions can be evaluated by testing the submodularity. For our problem, to determine if $v_{i+1}$ will be selected given $(U_{t,i},t,k)$, Lemma \ref{lemma: sufficient} suggests that we can do this by comparing $g(U, S_i+v_{i+1}, t)-g(U,S_i, t)$ and $g(U,v_{i+1}, t)-g(U,\emptyset, t)$. As a result, we can leverage the quantity \[\Ma(U,S_i, v_{i+1}, t)\define \dfrac{g(U, S_i+v_{i+1}, t)-g(U,S_i,  t)}{g(U,v_{i+1}, t)-g(U,\emptyset, t)},\] 
which measures the benefits of including $v_{i+1}$ in the seed set of the current seeding step. When we have $\Ma(U,S_i, v_{i+1}, t)=1$, there is a good reason to include $v_{i+1}$ in the current step because observing more diffusion results will not decrease the marginal gain of $v_{i+1}$. When $\Ma(U,S_i, v_{i+1}, t)$ is close to $0$, it simply means that the marginal gain of $v_{i+1}$ is small, and therefore we may wait for more observations to have better seed nodes.

The second metric $\Mt()$ is designed by looking into the one-step loss of the influence resulted by $v_{i+1}$. Given $(U,t,k)$, if the total influence resulted by $v_{i+1}$ can always complete within $t-1$  diffusion rounds, we have a good reason for not selecting it in the current seeding step in that there is no loss of waiting for another diffusion round. Formally, consider the status $U_{t,i}=(A(U)\cup S_i,\dL(U), \dD(U))$ with $\U_t(U_{t,i})$ being its future status after $t$ diffusion rounds. When the influence from $v_{i+1}$ is allowed to spread for $t^* \in \neqZ$ diffusion rounds, the marginal gain of selecting $v_{i+1}$ would be 
{ 
	\begin{align*}
	&h(U,S_{i},v_{i+1},t,t^*)\\
	&\define \E_{U^* \sim \Dis(\U_t(U_{t,i}))}\Big[g(U^*,v_{i+1},t^*)-|A(U^*)|\Big].
	\end{align*}
}
The above formula can be explained as: we first simulate the influence from $S$ for $t$ rounds to obtain a status $U^* \in \U_t(U_{t,i})$, and conditioning on $U^*$ we then simulate the influence from $v_{i+1}$ for $t^*$ rounds. Viewing the diffusion process from a multi-step perspective shares the same insights in \cite{mossel2007submodularity}. Now let us utilize the quantity
\begin{align*}
&\Mt(U,S_{i},v_{i+1},t)\\
&\define \frac{h(U,S_{i},v_{i+1},t,t)-h(U,S_{i},v_{i+1},t,t-1)}{h(U,S_{i},v_{i+1},t,t)}
\end{align*}
to measure the loss incurred by seeding $v_{i+1}$ with a delay of one round. If $\Mt(U,S_{i},v_{i+1},t)=0$, there would be no such a loss and we thus should not select $v_{i+1}$ in the current seeding step. One the other hand, if we have $\Mt(U,S_i,v_{i+1},t)$ close to $1$ (i.e., $h(U,S_{i},v_{i+1},t,t-1)$ is small), it means that $v_{i+1}$ can hardly trigger any influence if seeded one round later, and therefore, we prefer to select it immediately. For example, we always have $\Mt(U,S_i,v_{i+1},t)=1$ when there is only one remaining diffusion round (i.e., $t=1$).

With the metrics $\Ma()$ and $\Mt()$, let us consider the quantity
\begin{align*}
&\Ind(U,S_i,v_{i+1},t)\\
&\define\alpha(t)\cdot \Ma(U,S_i, v_{i+1}, t)+\big(1-\alpha(t)\big)\cdot\Mt(U,S_i, v_{i+1}, t),
\end{align*}
where $\alpha(t) \define 1-1/t \in [0,1]$.  According the above design, given the state $(U,t,k)$, we would select $v_{i+1}$ when $\Ind(U,S_i,v_{i+1},t)$ approaches to $1$, while not select $v_{i+1}$ when it approaches to $0$. We can see that $\alpha(t)$ is a balancing parameter, and it becomes larger when fewer diffusion rounds remain, which increases the importance of $\Mt()$ when the time constraint is severe. As a result, the module $\ifadd()$ can be constructed as
\begin{equation}
\label{eq: ind}
\ifadd(U,S_i, v_{i+1}, t) \define
\begin{cases}
\true &  \hspace{0mm} \hspace{-0.5mm} \text{if $\Ind(U,S_i, v_{i+1}, t) \geq \theta$} \\
\false & \hspace{0mm} \hspace{-0.5mm} \text{otherwise } 
\end{cases}
\end{equation} 
where $\theta \in [0,1]$ is a controllable threshold that can reflect certain prior knowledge or preference. For instance, adopting a small $\theta$ could result in more seed nodes in the first several diffusion rounds. The effect of $\theta$ will be further investigated through experiments. We denote the resulted policy as the Fast Foresight (\textbf{FF}) policy:

\begin{policy}[\textbf{Fast Foresight (FF) Policy $(\theta \in (0,1))$}]
	\label{policy: FOFM}
	In each seeding step with state $(U,t,k)$, the seed set is computed by Alg. \ref{alg: framework} with $\ifadd()$ given by Eq. (\ref{eq: ind}).
\end{policy} 

In addition to the subroutine $\gr()$, implementing the FF policy requires to compute $\Ma()$ and $\Mt()$, which can be estimated again by sampling. Although both the SOF policy and FF policy involve the sampling procedure, it can be shown both theoretically and experimentally that FF is more efficient than SOF. 


\begin{table}[t]
	\caption{Time Complexity.}
	\centering
	\label{table: time}
	\begin{tabular}{@{}c|l@{}}
		\toprule
		Static Policy & $O\big((k+l) (m+n)\cdot \log n/\epsilon^2\big)$  \\
		Greedy Policy & $O\big((k+l)(m+n)\cdot \log n/\epsilon^2\big)$ \\
		SOF Policy & $O\big((k^2+kl) (m+n)\cdot \log n \cdot L/\epsilon^2\big)$  \\
		FF Policy & $O\big((k+l) (m+n)\cdot \log n/\epsilon^2+Lk^2\cdot (m+n)\big)$ \\
		\bottomrule
	\end{tabular}
\end{table}

\begin{lemma}
	\label{lemma: complexity}
	Suppose that we use $L \in \neqZ$ samples for each estimation in SOF and FF, and the parameters used in $\gr()$ are $\epsilon$ and $l$. The complexity of the policies is given by Table \ref{table: time}.
\end{lemma}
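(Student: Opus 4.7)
The plan is to treat the greedy subroutine $\gr()$ as a black box with the cost given by Lemma \ref{lemma: greedy}, and to derive each row of Table \ref{table: time} by (i) counting how many times $\gr()$ is invoked during the seeding process of the policy in question, and (ii) tallying any additional simulation cost incurred outside of $\gr()$. Throughout, I use the fact that each reverse BFS or forward BFS traverses each edge at most once per round and that the aggregate budget consumed across all seeding steps is at most $K$ (which I write as $k$, following the table).

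For the Static Policy, $\gr(U,t,a_i)$ is invoked once per seeding step with sizes $a_1,\dots,a_T$ satisfying $\sum_i a_i \le k$. Applying Lemma \ref{lemma: greedy} to each call and summing gives $\sum_{i}O\bigl((a_i+l)(m+n)\log n/\epsilon^2\bigr)=O\bigl((k+l)(m+n)\log n/\epsilon^2\bigr)$, where the linear-in-$T$ contribution from the $l$-terms is absorbed since only the seeding steps that actually place seeds contribute. The Greedy Policy (Policy \ref{policy: greedy}) is analogous: at most one seed node is chosen per non-final seeding step, and the aggregate budget is capped by $k$, so the same bound follows directly.

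For the SOF Policy, each seeding step with state $(U,t,k)$ estimates $\overline{\beta}(U,k,t,i)$ for $i=0,\dots,k$ using $L$ Monte-Carlo trials, and each trial requires (a) an invocation of $\gr(U,t,i)$ to form the intermediate status $U_{t,i}$ and (b) an invocation of $\gr(U^*,t-1,k-i)$ on the sampled future status. Hence each seeding step performs $O(kL)$ calls to $\gr()$ with budget parameter at most $k$, each of cost $O((k+l)(m+n)\log n/\epsilon^2)$ by Lemma \ref{lemma: greedy}. Summing over the at most $T$ seeding steps while observing that the relevant budget parameter decreases monotonically yields the claimed $O\bigl((k^2+kl)(m+n)\log n\cdot L/\epsilon^2\bigr)$ bound after separating the $k^2$ term (from the $k$ budget per call times $k$ calls per step) from the $kl$ term.

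For the FF Policy, Alg. \ref{alg: framework} invokes $\gr(U,t,k)$ only once per seeding step, contributing $O((k+l)(m+n)\log n/\epsilon^2)$ in total across the process. The remaining work is the estimation of $\Ma(U,S_i,v_{i+1},t)$ and $\Mt(U,S_i,v_{i+1},t)$ for each of at most $k$ candidate nodes $v_{i+1}$; each estimate is obtained from $L$ forward simulations, and each simulation is a BFS of at most $t\le k$ rounds on the residual graph, costing $O(m+n)$. Aggregating $L$ simulations times $k$ candidates per step and at most $k$ seeding steps yields $O(Lk^{2}(m+n))$ auxiliary cost, giving the stated bound. The only subtlety — and the main point to verify carefully — is that across all seeding steps the number of $\gr()$ invocations (and the total simulation work) does not accumulate beyond the per-step analysis; this follows because the residual budget strictly decreases whenever $\ifadd()$ accepts a node and the residual time horizon strictly decreases between seeding steps, so the amortized cost collapses to the single-step bound weighted by $k$.
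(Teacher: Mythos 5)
There is a genuine gap, and it stems from a misreading of what Table \ref{table: time} is bounding. The paper's convention, stated at the start of its proof, is that the complexity of an adaptive policy is the running time of computing \emph{one} seed set, i.e., the cost of a single seeding step; every row of the table is a per-step bound. You instead compute the \emph{total} cost over the entire seeding process and then assert that it ``collapses to the single-step bound'' by an amortization argument. That amortization does not hold. For the static policy, summing Lemma \ref{lemma: greedy} over the $d$ seeding steps that place seeds gives $O\bigl((\sum_i a_i + dl)(m+n)\log n/\epsilon^2\bigr)=O\bigl((k+dl)(m+n)\log n/\epsilon^2\bigr)$: the budget terms telescope to $k$, but the $l$-dependent term (and any fixed per-call overhead) is paid once per invocation, and $d$ can be as large as $\min(T,k)$, so the total is \emph{not} $O\bigl((k+l)(m+n)\log n/\epsilon^2\bigr)$. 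Your remark that this contribution ``is absorbed since only the seeding steps that actually place seeds contribute'' does not fix this. The same objection applies to your treatment of the greedy policy and of SOF, where summing the per-step bound $O\bigl((k^2+kl)(m+n)\log n\cdot L/\epsilon^2\bigr)$ over several steps cannot return the same expression; the paper never performs such a sum.

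The per-step counting you do is mostly sound and matches the paper where it overlaps: for SOF, $O(kL)$ invocations of $\gr()$ per step, each costing $O\bigl((k+l)(m+n)\log n/\epsilon^2\bigr)$, gives the table's row directly with no summation over steps needed. But for FF your decomposition drifts: you obtain the $k^2$ factor as (candidates per step) $\times$ (number of seeding steps), using the incorrect bound ``at most $k$ seeding steps'' (the process has exactly $T$ seeding steps, and $T$ is unrelated to $k$), whereas the paper obtains it within a single step as $k$ candidates times a per-candidate estimation cost of $kL(m+n)$ for $\Ma()$ and $\Mt()$. You also did not account for the $O(m+n)$ finality check in the greedy policy, though that is absorbed asymptotically. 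To repair the proof, drop the cross-step aggregation entirely, fix attention on one seeding step with state $(U,t,k)$, and charge: one call to $\gr()$ for the static, greedy, and FF policies; $O(kL)$ calls to $\gr()$ plus $O(kL)$ single-round simulations for SOF; and the $\Ma()/\Mt()$ estimation cost for the at most $k$ candidates examined by Alg. \ref{alg: framework} for FF.
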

\begin{proof}
Recall that for an adaptive seeding policy, we measure its complexity by the running time of computing one seed set. The static policy invokes the greedy node selection rule in each step so its time complexity is $O((K+l) (m+n)\cdot \log n/\epsilon^2)$. The greedy policy first examines if the status is final, which can be done in $O(m+n)$, and therefore the total time complexity is again $O((K+l)(m+n)\cdot \log n/\epsilon^2)$. In the SOF policy, each estimation in line 3 in Alg. \ref{alg: OF policy} consists of two parts: the simulation, running in $O(m+n)$, and the greedy node selection rule, running in $O((K+l)(m+n)\cdot \log n /\epsilon^2)$. Therefore,  Alg. \ref{alg: OF policy} runs in $O((K^2+K l) (m+n)\cdot \log n\cdot L/\epsilon^2)$. In FF policy, line 2 runs in $O((K+l)(m+n)\cdot \log n/\epsilon^2)$, and each estimation of $\Ma()$ and $\Mt()$ can be done in $KL(m+n)$. Therefore, the complexity of FF is $O((K+l)(m+n)\cdot \log n/\epsilon^2+K^2L\cdot(m+n))$.
\end{proof}

\begin{table}[t]
	\caption{Dataset.}
	\centering
	\label{table: data}
	\begin{tabular}{@{} c|l l l l @{}}
		\toprule
		& Power & Wiki & Reddit & Youtube \\
		\midrule
		Nodes &2,500& 8,300 & 124,960 & 1,157,900 \\
		Edges &26,449& 103,689 & 624,349 &  5,975,248\\
		\bottomrule
	\end{tabular}
\end{table}

\section{Experiments}
\label{sec: exp}
In this section, we report the results of the experiments done for studying the practical performance of the proposed policies, aiming at examining (a) their ability to achieve a large influence, (b) the running time, and (c) the robustness of the seeding pattern.

\textbf{Datasets.} We adopt four datasets: (a) \textit{Power}: a synthetic power-law graph \cite{cowendigg}, (b)  \textit{Wiki}: a Wikipedia voting network \cite{leskovec2010signed}, (c) \textit{Reddit}: a graph inferred from Reddit social networking platform, and (d) \textit{Youtube}: a social network extracted from Youtube.com \cite{yang2015defining}. Reddit is a new dataset created in this paper. We collected 1,000 threads, each of which has at least 1,500 replies, from the News subreddit in August 2019 and constructed a graph with the users who have participated at least two threads. A brief summary of the datasets is given in Table. \ref{table: data}.

\begin{table*}[t]
	\centering
	\caption{Influence Resulted by Different Policies. }
	\label{table: influence}
	\begin{tabular}{@{}c|c c c c c c|c|c c c|c|c@{}}
		\toprule
		& \multicolumn{6}{c|}{FF}& \multirow{1}{*}{SOF} & \multicolumn{3}{c|}{Static} & \multirow{1}{*}{Greedy} & \multirow{1}{*}{\makecell{NonAd}}\\
		\midrule
		& $\theta=0.01$ &   $\theta=0.2$ & $\theta=0.26$ & $\theta=0.4$  & $\theta=0.6$ & $\theta=0.7$  &  & $k=1$ & $k=2$ &$k=5$ & &\\
		Power &$963.2$   &$971.9$ & $\textbf{981.1}$ &$878.7$ &$729.3$ &$653.4$  & $\textbf{979}$  & $975.7$&  $\textbf{994.7}$ & $\textbf{994.4}$ & $507.4$ & $932.3$ \\
		\hline
		& $\theta=0.4$ & $\theta=0.5$ & $\theta=0.6$ & $\theta=0.7$ & $\theta=0.8$ &  $\theta=0.9$&  & $k=1$ & $k=2$ &$k=5$ & &\\
		Wiki    &$681.0$ &$686.4$ &$\textbf{694.8}$ &$665.2$ &$538.6$ &$239.8$ & $\textbf{694.5}$ & $687.6$ & $687.2$ & $688.6$ & $493.1$ & $669.1$  \\
		\hline
		& $\theta=0.2$ & $\theta=0.3$ & $\theta=0.4$ & $\theta=0.5$ & $\theta=0.6$ & $\theta=0.7$ &    & $k=1$ & $k=2$ &$k=5$ & &\\
		Reddit  &$749.8$ &$761.0$ &$759.7$ &$\textbf{772.2}$ &$684.6$ & $471.3$  &  $\textbf{810.5}$ & $657.9$ & $674.7$ & $748.3$ & $181.6$ & $734.4$ \\
		\midrule
		\multirow{2}{*}{\makecell{Youtube\\$(10, 20, 0.01)$}} & $\theta=1$E-$4$& $\theta=1$E-$3$& $\theta=5$E-$3$ & $\theta=0.01$& $\theta=0.1$ & $\theta=0.2$ &   & $k=1$ &$ k=2$ &$k=5$ & &\\
		&$\textbf{7806}$ & $7791$& $7762$& $7762$& $7633$ & $7504$    &  n/a &$7647$ & $7747$& $\textbf{7873}$& $6800$& $7774$\\
		\hline
		\multirow{2}{*}{\makecell{Youtube\\$(10, 10, 0.005)$}} & $\theta=0.2$ & $\theta=0.3$& $\theta=0.4$ & $\theta=0.5$& $\theta=0.6$ & $\theta=0.7$  &  & $k=1$ & $k=2$ &$k=5$ & &\\
		& $898.3$&  $912.5$ & $911.2$ &$\textbf{955.2}$  &$\textbf{962.6}$   & $868.8$   &  n/a &$913.1$ &$903.1$ & $922.3$& $694.5$& $927.4$\\
		\hline
		\multirow{2}{*}{\makecell{Youtube\\$(10, 10, 0.001)$}} & $\theta=0.1$& $\theta=0.2$ & $\theta=0.5$& $\theta=0.6$ & $\theta=0.7$ & $\theta=0.8$    &  & $k=1$ & $k=2$ & $k=5$ & &\\
		& $53.6$ & $54.1$ & $53.9$ &$ 54.3$ & $56.0$ & $54.2$   &  n/a &$\textbf{82.3}$ & $\textbf{78.9}$ & $63.5$ & $71.8$& $54.6$\\
		\bottomrule
		\multicolumn{13}{r}{*Competitive results are in bold. }
		
	\end{tabular}
\end{table*}

\textbf{Settings.} For Power and Wiki, we consider the weighted-cascade setting where $p_{(u,v)}=1/\InDeg(v)$ with $\InDeg(v)$ being the in-degree of $v$. For Reddit, the probability on edge $p_{(u,v)}$ is proportional to the frequency between $u$ and $v$, where the frequency is measured by the number of the threads that both $u$ and $v$ have participated in. On Power, Wiki and Reddit, we consider the setting $(T,K)=(10,50)$. We adopt a short period in order to examine the ability of each policy to deal with a severe time constraint. For Youtube, each edge $e \in E$ has the same propagation probability $p_e$, and we adopt three settings:  $(T,K,p_e)=(10,20,0.01)$, $(10,10,0.005)$, and $(10,10,0.001)$. We will shortly see how these settings could help us investigate the property of the FF policy. The implementation of the greedy node selection rule follows the vanilla reverse sampling framework \cite{tang2015influence}.  For ensuring that each policy could run in a reasonable time, $500$ (resp., $50$) samples were used for each estimation in FF (resp., SOF). We tested the $k$-filter uniform pattern for the static policy with $k\in \{1,2,5\}$. We also tested the greedy policy and the non-adaptive policy, denoted as Greedy and NonAd, which can be taken as two baselines. For each dataset and each seeding policy, we repeated the experiment for 300 times and report the average result. Our experiments were done on an Intel Xeon Platinum 8000 Series processor with parallelizations. We wish to note that, to the best of our knowledge, there is no existing algorithm for the AIM problem that can meet a hard deadline. The analysis in \cite{vaswani2016adaptive} provided a general framework but no specific algorithm for AIM was studied in their experiments. Therefore, we focus on experimentally examining the performance of the policies proposed in this paper. 

\subsection{Results}
The resulted influence under each policy is given in Table \ref{table: influence}, and the running time is shown in Table \ref{table: running_time} in which the report of Youtube is for $(T,K,p_e)=(10,10,0.005)$. The results of SOF on Youtube is not reported because it was not able to complete within five days. Notice that all the methods are heuristic, and we did not have any prospect on either which policy would provide the best performance or the seeding patterns dynamically constructed during the seeding process. 

\textbf{Results on Power, Wiki, and Reddit.} According to Table \ref{table: influence}, the SOF policy can provide the most competitive performance, but it may take hours to compute one seed set, making it not suitable for time-sensitive tasks. Second, the FF policy is reasonably good, provided that an appropriate $\theta$ is used. On Reddit, the FF policy outperforms the static policy by an evident margin. The static policy can also produce moderate performance, and it gives the best result on Power with $k=2$. However, on Reddit, the static policy is worse than non-adaptive. Finally, the baseline methods, greedy policy and non-adaptive policy, are not effective. 

\textbf{Results on Youtube.} In order to test the extreme cases, we first consider Youtube with $(T, K, p_e)=(10,20,0.01)$. In such a case, we see that the non-adaptive policy has relatively good performance, while the greedy policy is very ineffective. This is because the resulted influence is very large, leading to that the time constraint dominates the merits of the adaptive seeding. Therefore, TAIM reduces to the TIM problem, and the need for adaptive seeding is low. In another extreme case, when we have $(T,K,p_e)=(10,10,0.001)$, the influence can hardly spread for more than one round due to the low prorogation probability, and therefore TAIM is close to AIM for which the Greedy policy is relatively good, which is supported by the results in Table \ref{table: influence}. In such a case, FF is again not effective due to the construction of $\ifadd()$. Note that such extreme cases are constructed artificially, and for the settings between those extreme cases, the FF policy can be effective, which can be seen from the results in Table \ref{table: influence} for the setting $(T,K,p_e)=(10, 10, 0.005)$. 

\begin{table}[t]
	\caption{Running Time. Each cell gives the average running time of computing one seed set.}
	\centering
	\label{table: running_time}
	\begin{tabular}{@{}c|l|l|l|l@{}}
		\toprule
		& Static & Greedy & SOF & FF\\
		\midrule
		Power  & $1.3$s & $<1$s & $21$min   &$6.0$s\\
		Wiki & $<1$s & $<1$s & $55$min & $2.5$s \\
		Reddit &$4.9$s & $4.9$s & $81.6$min & $50.1$s \\
		Youtube & $45.0$s & $44.0$s &  n/a  & $51.5$s\\
		\bottomrule
	\end{tabular}
	\vspace{-3mm}
\end{table}


\begin{figure*}[t]
	\centering
	\subfloat[{[Power, 0.2]} ]{\label{fig: power_02_pattern}
		\includegraphics[width=0.16\textwidth]{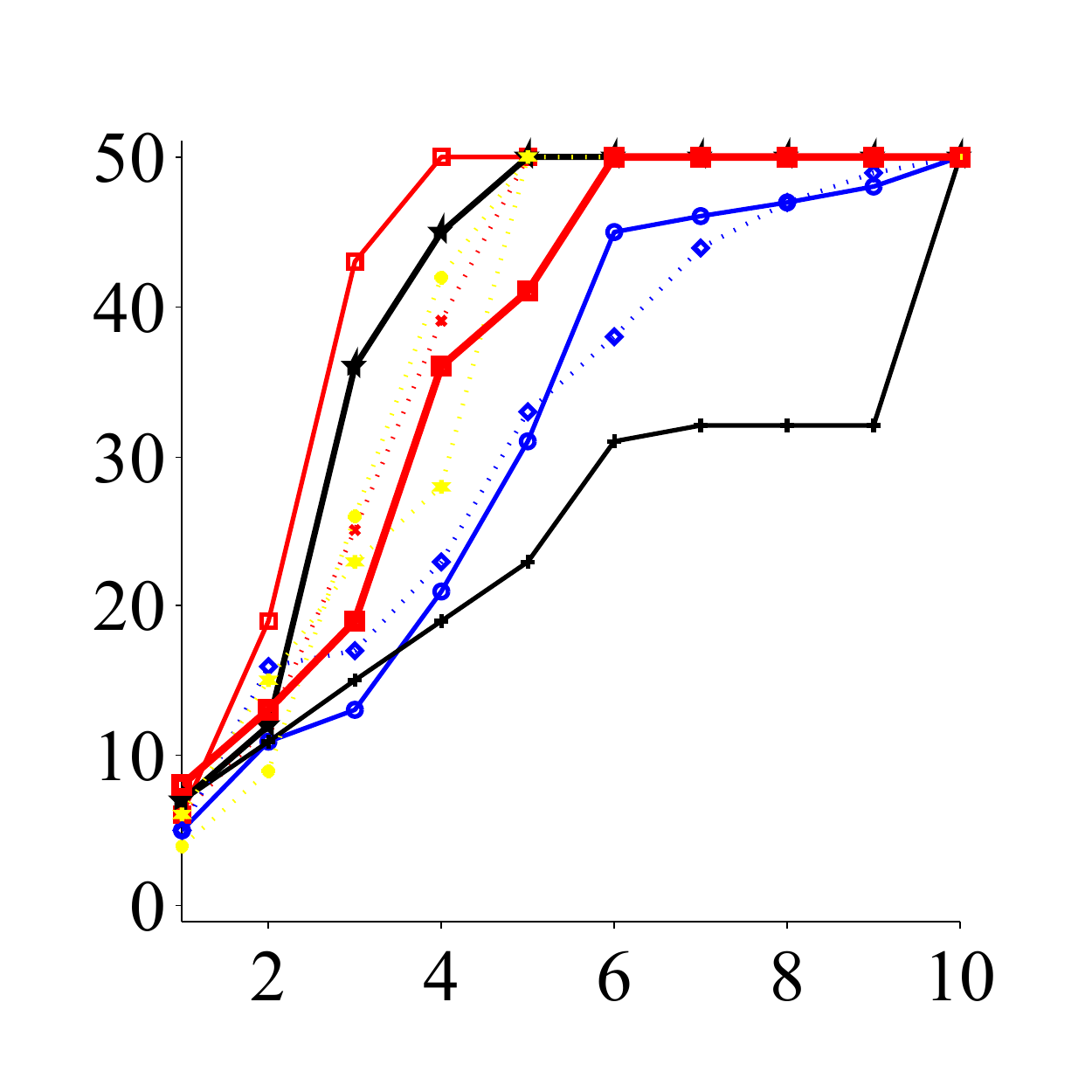}}
	\subfloat[{[Power, 0.5]} ]{\label{fig: power_05_pattern}
		\includegraphics[width=0.16\textwidth]{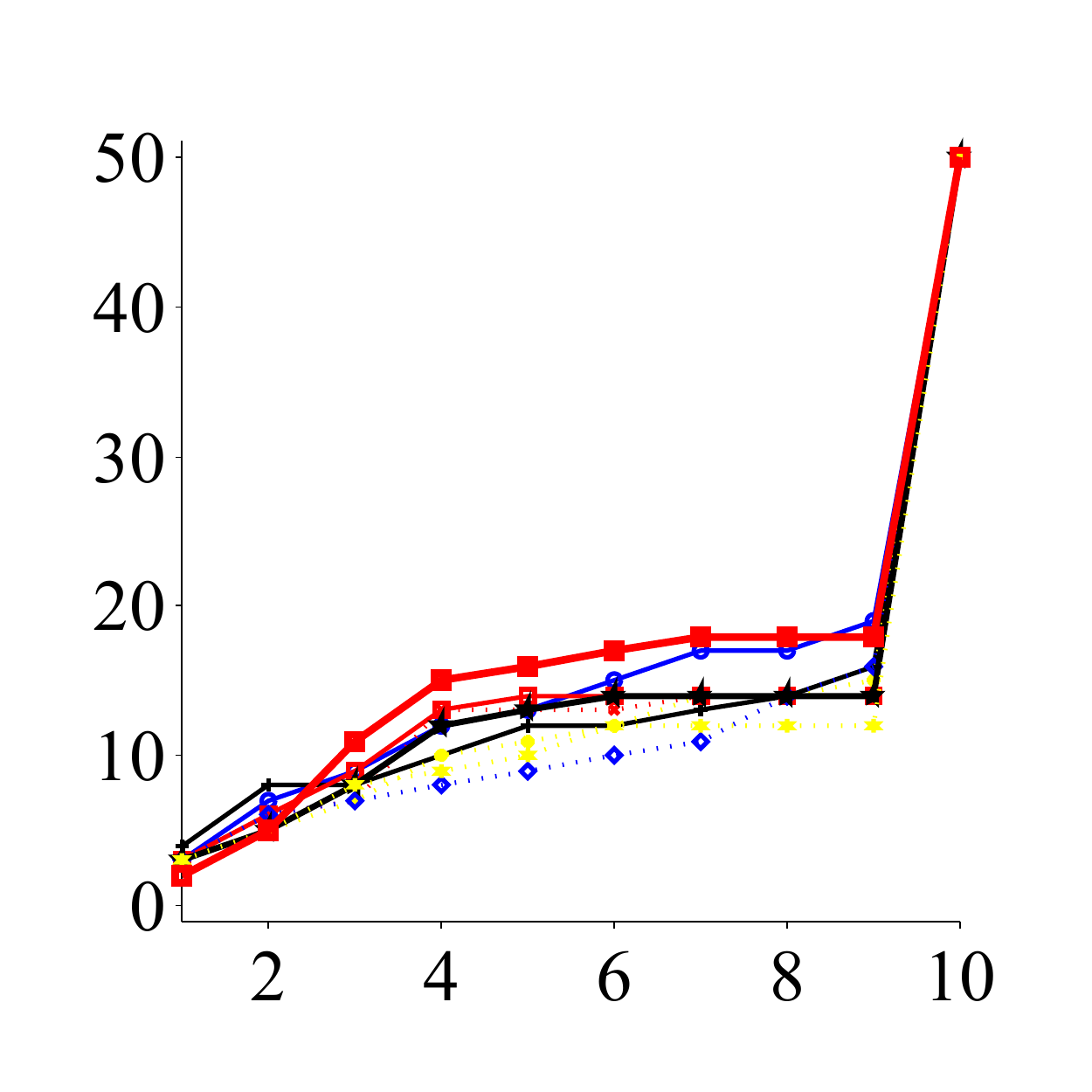}}
	\subfloat[{[Wiki, 0.6]} ]{\label{fig: wiki_06_pattern}
		\includegraphics[width=0.16\textwidth]{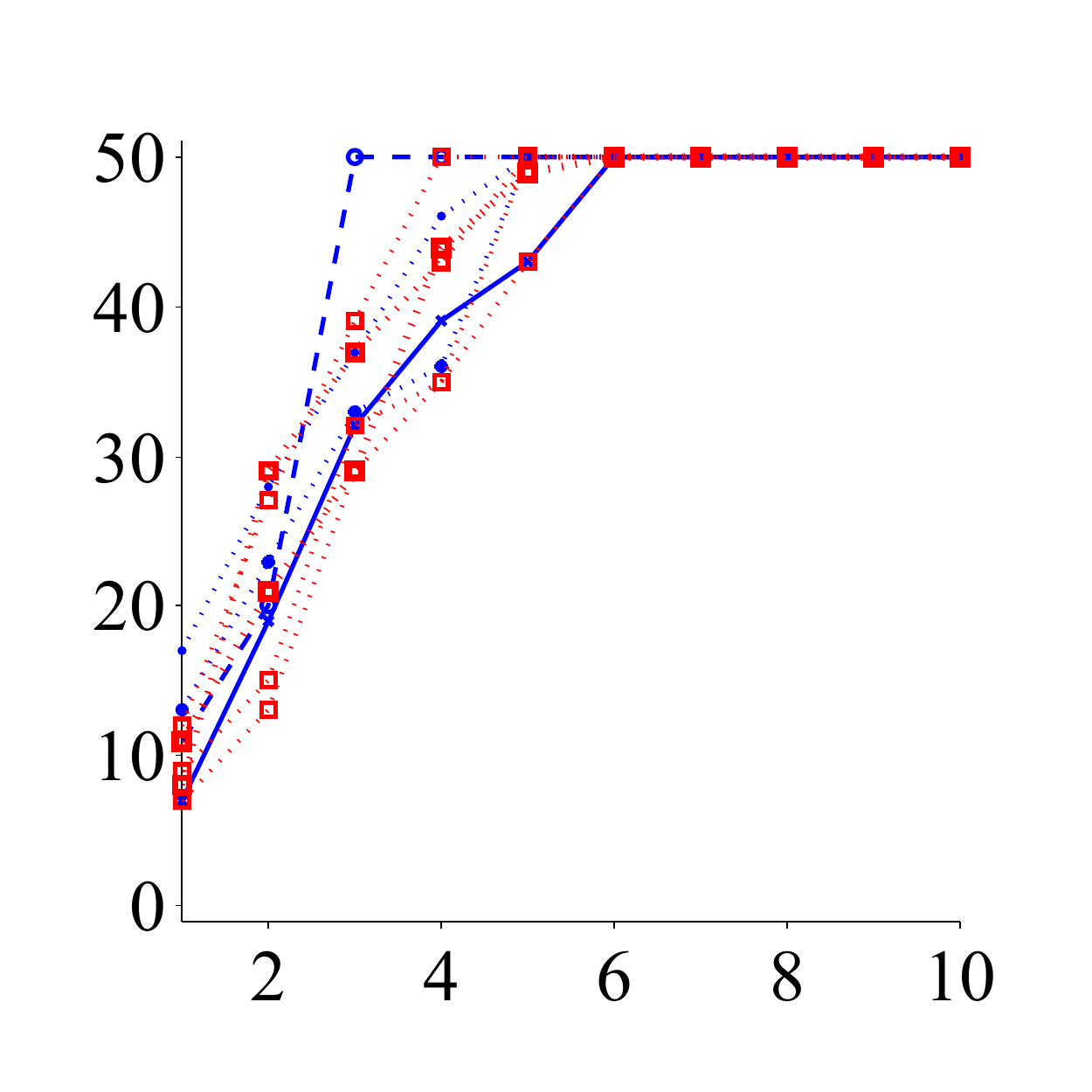}}
	\subfloat[{[Reddit, 0.2]} ]{\label{fig: reddit_02_pattern}
		\includegraphics[width=0.16\textwidth]{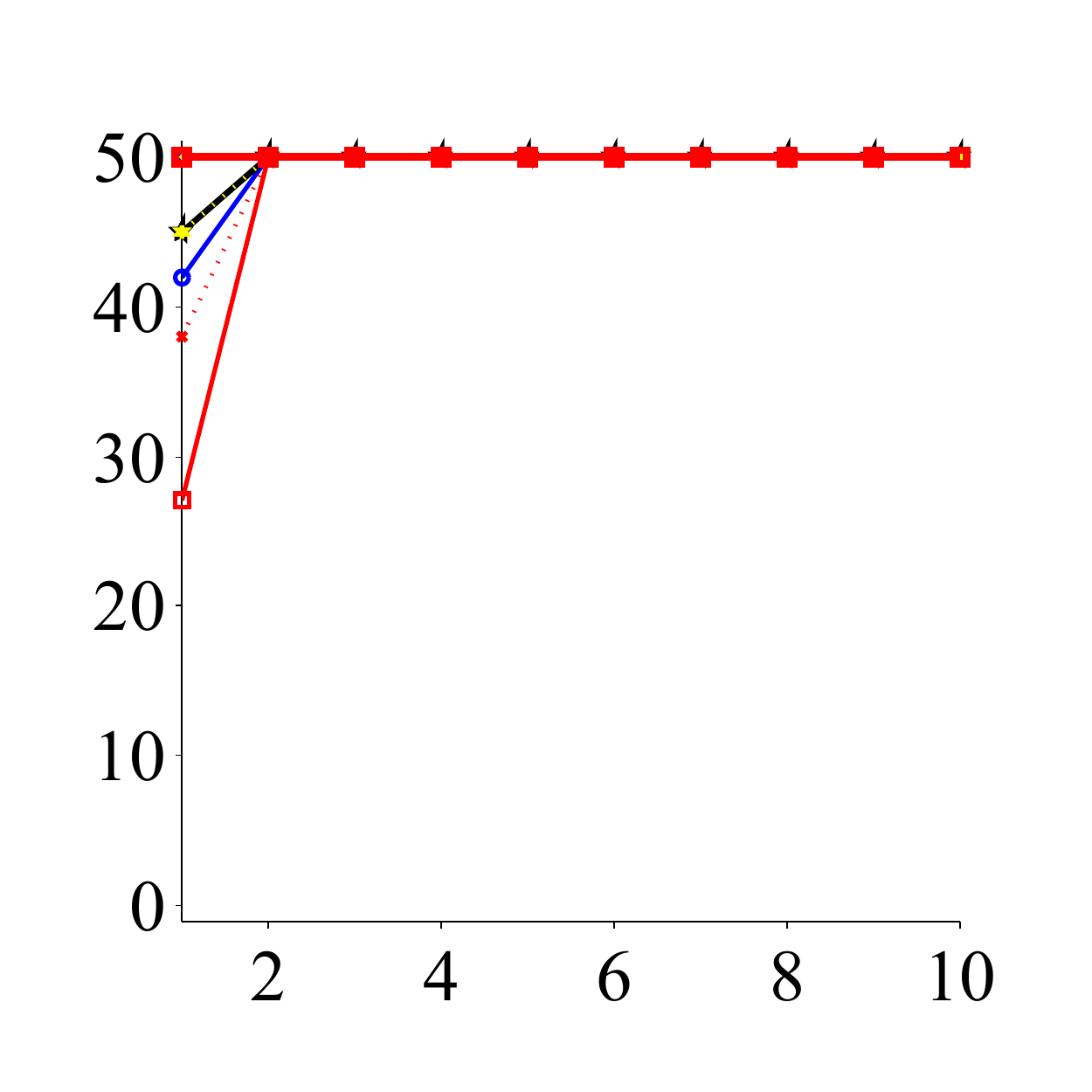}}
	\subfloat[{[Reddit, 0.5]} ]{\label{fig: reddit_05_pattern}
		\includegraphics[width=0.16\textwidth]{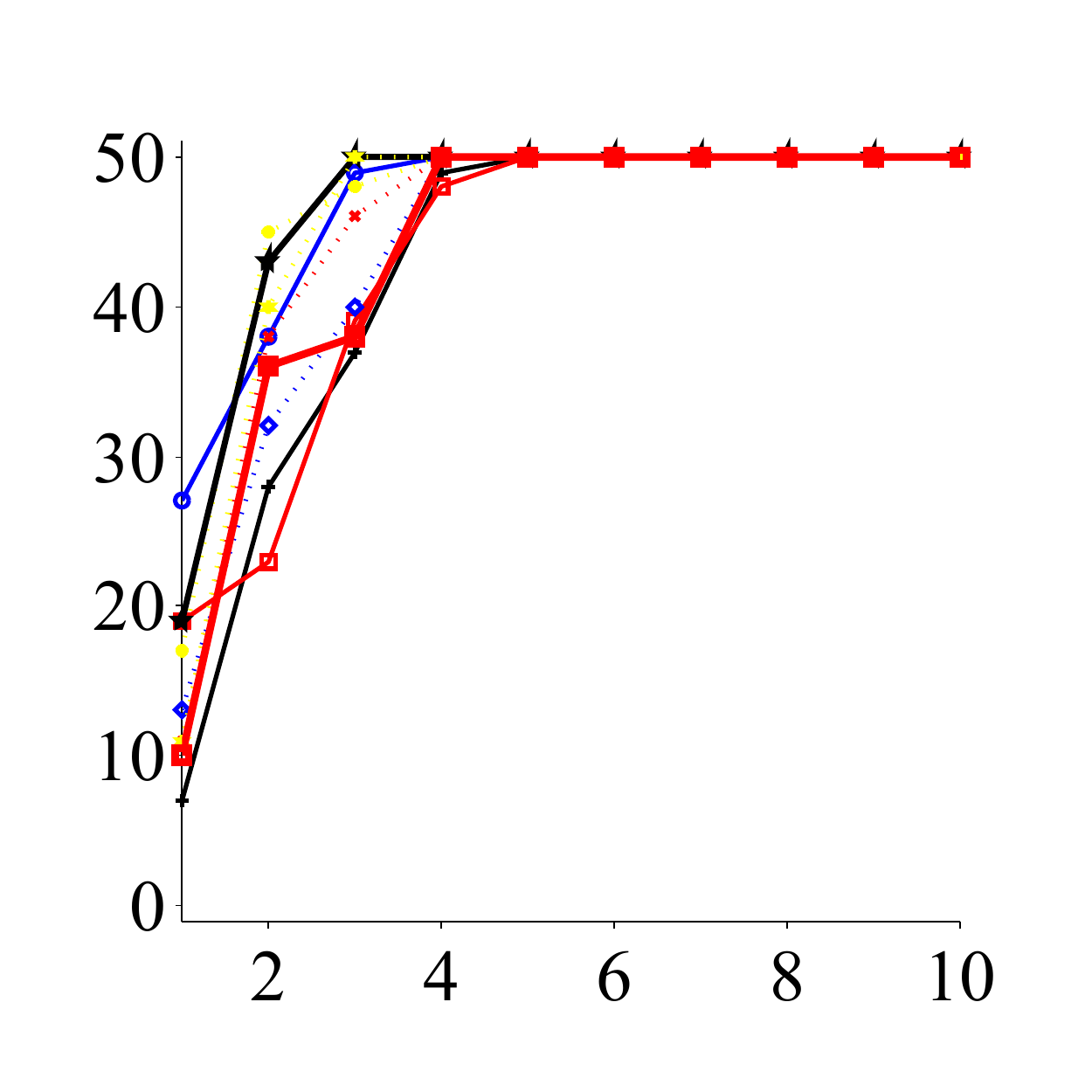}}
	\subfloat[{[Youtube-0.005, 0.6]} ]{\label{fig: youtube_06_pattern}
		\includegraphics[width=0.15\textwidth]{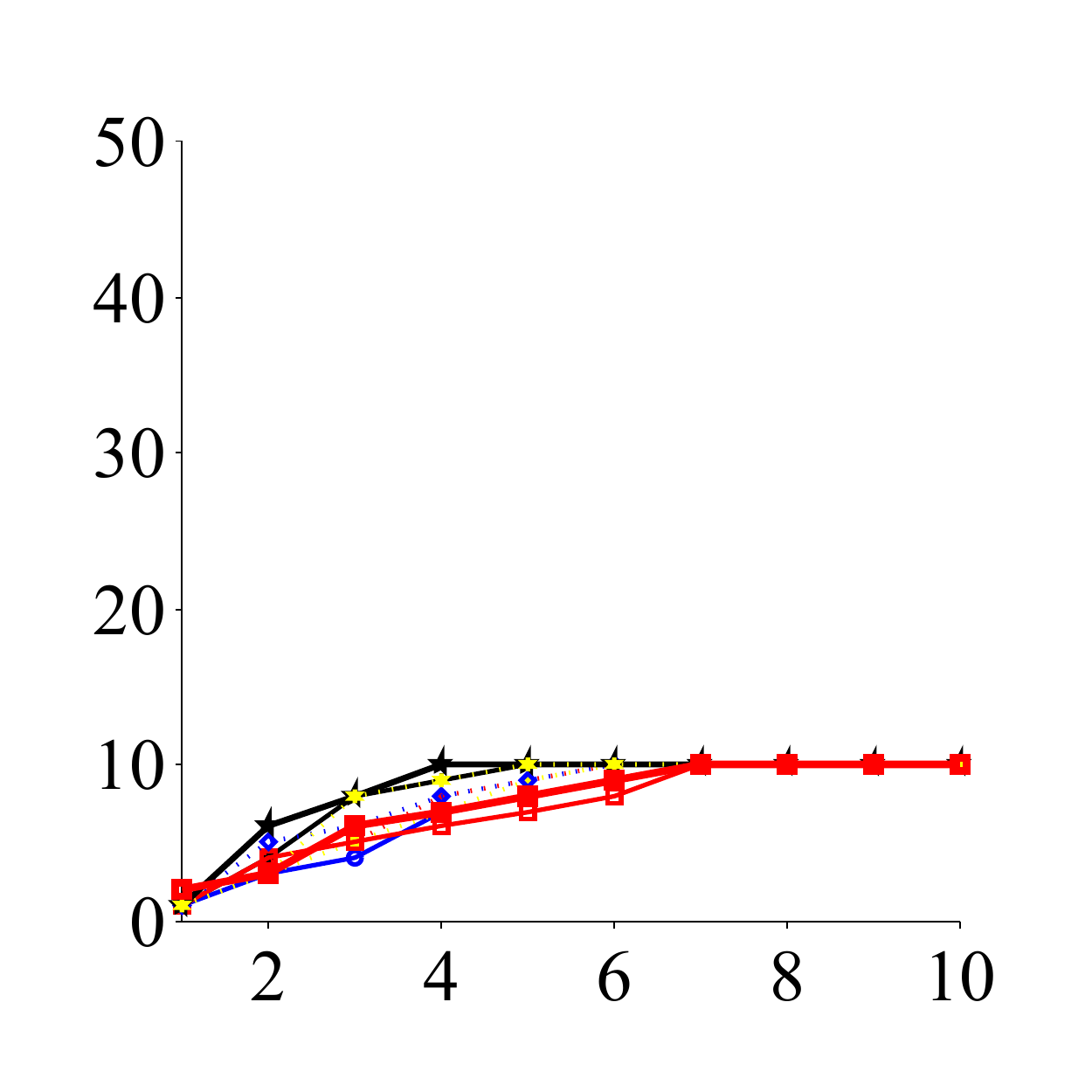}}
	\caption{Cumulative pattern of seed set size. Each subgraph is labeled as {[$\text{dataset}, \theta$]}, and it gives the results of ten random experiments where the point $(x,y)$ shows the total budget $y$ used by the $x$-th seeding step.}
	\label{fig: ff_pattern}
	\vspace{-4mm}
\end{figure*}


\textbf{Analysis of FF Policy.} According to the construction of $\ifadd()$ and $\Ind()$ in Eq. (\ref{eq: ind}), the $\theta$ close to either $0$ or $1$ is not desired, which can be seen from Table. \ref{table: influence}. One interesting observation from Table. \ref{table: influence} is that the performance of FF is concave with respect to $\theta$. For instance, the performance on Reddit is monotone increasing on $[0, a]$ with $a\approx 0.5$ while monotone decreasing after $\theta=a$. We can see that the optimal point varies over different datasets. For example, the best performance is given at $\theta \approx 0.26$ on Power, but for Reddit the optimal point is at $\theta \approx 0.5$. While we do not have any prior estimate on the optimal $\theta$, such a concave pattern suggests that a binary search can be effective.
Second, since the seeding patterns are constructed in real-time, we are interested in that if such patterns are robust. To this ends, for each setting, we plot the patterns generated in ten random simulations, as shown in Fig. \ref{fig: ff_pattern}. As shown in the figure, while the seeding patterns are not exactly the same in different simulations, they do exhibit a similar pattern under the same $\theta$. For example, on Power, most of the budget is used by the $6$-th seeding step under $\theta=0.2$, while more than half of the budget is used after the $8$-th seeding step under $\theta=0.5$.


\textbf{Summary.} Overall, the FF policy is cost-effective in most cases except for the extreme settings, and it results in meaningful and robust seeding patterns controlled by $\theta$. The static policy is worse than FF in average, but it can deal with extreme cases such as Youtube with $(T,K,p_e)=(10,20,0.01)$ or $(T,K,p_e)=(10,10,0.001)$. The SOF policy is effective on small datasets but time-consuming, so reducing its time complexity can potentially make it a desired practical solution for large datasets. Finally, the baselines, Greedy and NonAd, only perform well in certain extreme cases where the TAIM reduces to AIM or TIM.

\section{Conclusion}
\label{sec: con}
In this paper, we have studied the time-constrained adaptive influence maximization (TAIM) problem. The outcomes include the hardness result in computing the optimal policy, a lower bound of the adaptive gap, and, a series of seeding policies. In particular, we show the new hardness in the TAIM problem and observe a critical trade-off for designing effective seeding policies. 


\appendices


\section{{Proof of Lemma \ref{lemma: hardness_2}}}
\label{proof: lemma: hardness_2}
Given a graph and two nodes $s_1$ and $s_2$, the s-t connectedness problem asks for the number of subgraphs in which $s_1$ and $s_2$ are connected. Its decision version is given as follows.
\begin{problem}
	\label{problem: s-t-decision}[\textbf{s-t connectedness}]
	Given a directed graph $G_s=(V_s, E_s)$, an integer $k$ and two nodes $s_1$ and $s_2$, decide whether the number of $s_1$-$s_2$ connected subgraphs is no larger than $k$.
\end{problem}

An oracle of Problem \ref{problem: s-t-decision} can be used to answer the s-t connectedness problem by a binary search, and the oracle is called $O(|E_s|)$ times because the maximum number of s-t connected subgraphs is $2^{|E_s|}$. Since the s-t connectedness problem is \#P-complete \cite{valiant1979complexity}, a polynomial algorithm for Problem \ref{problem: s-t-decision} would yield $NP=P$. Next, we give a reduction from Problem \ref{problem: s-t-decision} to TAIM.

Let us consider an instance of Problem \ref{problem: s-t-decision} given by $(G_s, s_1, s_2, k)$. Let $n_s$ and $m_s$ be the number of nodes and edges in $G_s$, respectively. Without loss of generality, we assume that there is no edge pointing out from $s_2$ and no edge pointing into $s_1$.

\textbf{Reduction.} We construct an instance of TAIM as follows. Let $p_1$ and $p_2$ be small real numbers in $(0,1)$, and $A, B, C$ and $D$ be integers where $C=4n_s$, $D=\frac{1-\frac{k}{2^{m_s}}\cdot p_2}{(1-\frac{k}{2^{m_s}})\cdot p_1\cdot p_2}\cdot C$, $B=4D$ and $A=4B$.\footnote{We omit the rounding issue as it is not critical.} We intent to make the following relationship satisfied
\begin{equation}
\label{eq: reduction}
A\gg B \gg D >C \gg  n_s
\end{equation}
The social network structure $G=(V, E)$ is shown in Fig. \ref{fig: reduction}, built through the following steps:
\begin{itemize}
	\item Copy the graph $G_s$ with $s_1$ and $s_2$. For each edge  $e $ in graph $G_s$, set $p_e$ as $0.5$.
	\item (\textbf{Path $P_s$}) Insert $n_s-1$ new nodes with $n_s$ edges so that the nodes form a simple path from $s_1$ to $s_2$. Let one of the $n_s$ added edges have propagation probability $p_1$ and other edges have propagation probability $1$. We denote this path as $P_s$.
	\item (\textbf{Group A}) Insert $A$ new nodes to the graph and let them be connected from $s_1$. We set that $p_e=1$ for each added edge $e$.
	\item Insert a new node labeled as $s_3$ and an edge $(s_2,s_3)$ with $p_{(s_2,s_3)}=p_2$.
	\item (\textbf{Group B}) Insert $B$ new nodes and let them be connected from $s_2$. We set that $p_e=1$ for each added edge $e$.
	\item (\textbf{Group C}) Select a node inserted in the last step and label it as $s_4$. Insert $C$ new nodes and make them connected from $s_4$. We set that $p_e=1$ for each added edge $e$.
	\item (\textbf{Group D}) Insert another new node $s_5$ and let it connect to $D$ new nodes. We set that $p_e=1$ for each added edge $e$.
\end{itemize}
We set the budget as $K=2$ and the time constraint as $T=n_s+2$. Now the instance of TAIM is completed.

\begin{figure}[t]
	\begin{center}
		\includegraphics[width=0.45\textwidth]{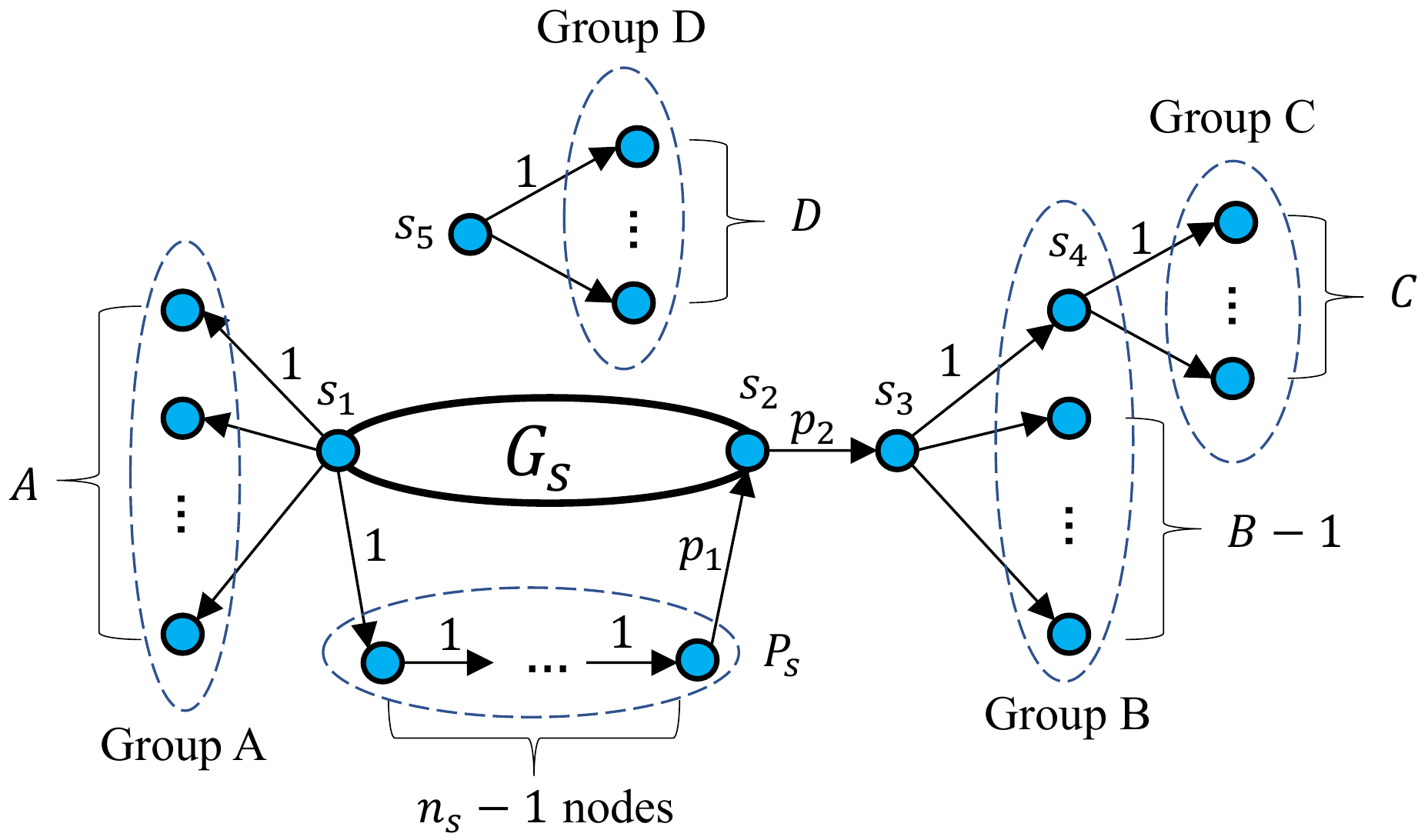} 
	\end{center} 
	\vspace{-2mm}
	\caption{\textbf{Reduction}}
	\label{fig: reduction}
	\vspace{-6mm}
\end{figure}

\textbf{Optimal Policy.} In the optimal policy, to maximize the number of active nodes, only the nodes in $\{s_1,s_3,s_4,s_5\}$ will be selected as seed nodes due to Eq. (\ref{eq: reduction}) as well as the fact that the budget is two and $p_1$ and $p_2$ are small. For the same reason, $s_1$ be must be selected in an optimal policy, so without loss of generality we assume that it is selected as the first seed node in the first seeding step. Now the problem left is to decide when to use the other budget. As we will only select seed nodes from $\{s_3,s_4,s_5\}$ unless they have all been activated, the only event that affects our decision is that if $s_3$ is activated. After each of the first $n_s$ diffusion rounds, once $s_3$ is activated, we should select $s_5$ as the other seed node. If $s_3$ is not activated, we can either wait for more diffusion rounds or select $s_3$ to maximize the number of active nodes. Because the time constraint is $n_s+2$ and leaving two diffusion rounds is sufficient for $s_3$ to activate all the nodes connected from it, it is optimal to wait until the $n_s$-th diffusion round. 

After the $n_s$-th diffusion round, if $s_3$ has been activated, it is clear that we should select $s_5$ as the second seed node. If $s_3$ is not activated yet, we would have two choices: (a) select the second seed node or (b) wait for another diffusion round and then select the second seed node. Note that there are only two diffusion rounds left, so the optimal policy must be one of those choices. Now we calculate the resulted influence.

\textbf{Policy a.} Suppose the seed node must be selected right after $n_s$ diffusion rounds. In this case, the profit is 
\begin{equation}
A+E_s+B+\Pr[\leq n_s]\cdot  (C+D)+(1-\Pr[\leq n_s])\cdot C+O(1)
\end{equation}
where $\Pr[\leq n_s]$ is probability that $s_3$ can be activated within $n_s$ rounds of diffusion, and $E_s$ is expected number of active nodes in $G_s \cup P_s$ resulted from $s_1$.

\textbf{Policy b.} We would wait for another diffusion round even if $s_3$ is not activated after $n_s$ diffusion rounds. Since the time constraint is $n_s+2$, this seed node must selected right after the $n_s+1$ diffusion round. Under this policy, when $s_3$ is activated before the $(n_s+1)$-th diffusion round, we would select $s_5$ as the second seed node, and therefore the total profit is $A+E_s+B+C+D+2$. If $s_3$ is activated exactly in the $(n_s+1)$-th diffusion round, we should select $s_5$ as the seed node, because there is only round left and $D>C$. In this case, the total profit is $A+E_s+B+D+2$. If $s_3$ is not activated after the $(n_s+1)$-th diffusion round, we should select $s_3$ to maximize the profit as $B$ is larger than $C$ or $D$, and therefore the total profit is $A+E_s+B+1$. In summary, the total profit under the second policy is 
\begin{equation}
A+E_s+B+ \Pr[\leq n_s]\cdot (C+D)+\Pr[= n_s+1]\cdot D+O(1). 
\end{equation}
where $\Pr[=n_s+1]$ is the probability that $s_3$ is activated exactly after $n_s+1$ rounds of diffusion.

Comparing the above two policies, for sufficiently large $n_s$, Policy a is better than Policy b if and only if $\frac{1-\Pr[\leq n_s]}{\Pr[=n_s+1]}\geq \frac{D}{C}$. Let $p^*$ be the probability that $s_2$ can be activated through $G_s$ from $s_1$. Because the longest simple path from $s_1$ to $s_2$ in $G_s$ has at most $n_s-1$ edges and the path $P_s$ has $n_s$ edges, $\Pr[\leq n_s]$ is the probability that $s_2$ is first activated by $s_1$ through $G_s$ but not $P_s$, and then $s_3$ is activated by $s_2$, which means $\Pr[\leq n_s]=p^*\cdot p_2$. Similarly, $\Pr[= n_{s}+1]$ is the probability that $s_2$ is first activated by $s_1$ through the path $P_s$ but not $G_s$, and $s_3$ is then activated by $s_2$, implying that $\Pr[= n_{s}+1]=(1-p^*)\cdot p_1 \cdot p_2$. Therefore, Policy a is better than Policy b if and only if $\frac{1-p^*\cdot p_2}{(1-p^*)\cdot p_1\cdot p_2}\geq \frac{D}{C} \iff p^* \leq \frac{k}{2^{m_s}}$. Because the probability of edges in $G_s$ is uniformly 0.5, $p^*$ is equal to $\frac{n^*}{2^{m_s}}$ where $n^*$ is the number of $s_1$-$s_2$ connected subgraphs. Thus, deciding which policy is better is equivalent to determining if the number of $s_1$-$s_2$-connected subgraphs is no larger than $k$, which completes the proof.

\bibliography{bib_amo}
\bibliographystyle{IEEEtran}

\begin{IEEEbiography}[{\includegraphics[width=1.0in,clip,keepaspectratio]{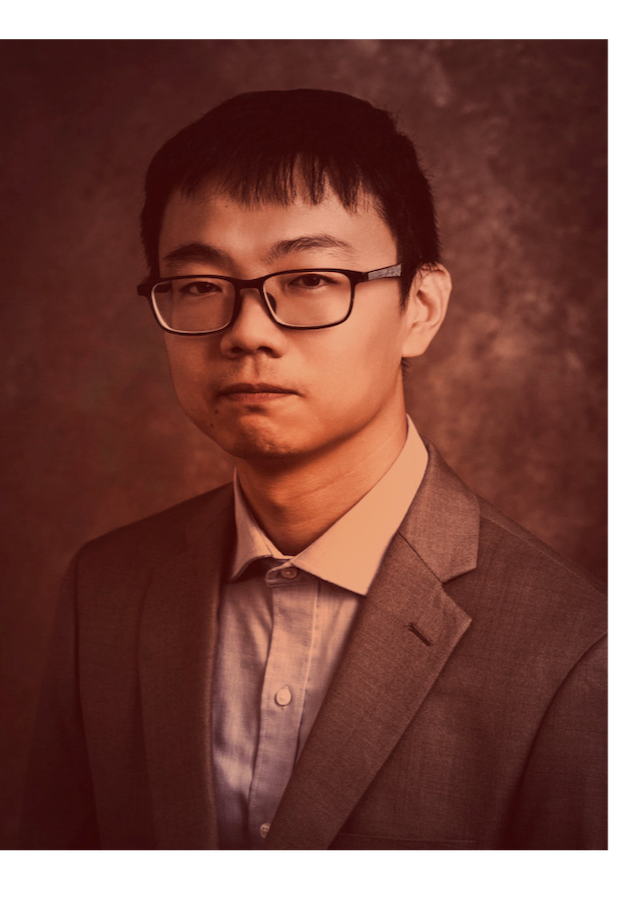}}]{Guangmo Tong (M'18)} is an Assistant Professor in the Department of Computer and Information Sciences at the University of Delaware. He received a Ph.D. in Computer Science at the University of Texas at Dallas in 2018. He received his BS degree in Mathematics and Applied Mathematics from Beijing Institute of Technology in July 2013. His research interests include computational social systems, machine learning, and theoretical computer science. He has published articals in vairous journals and conferences such as IEEE Transaction on Networking, IEEE Transaction on Computational Social System, IEEE INFOCOM, and NeurIPS. 
\end{IEEEbiography}
\begin{IEEEbiography}[{\includegraphics[width=0.9in,clip,keepaspectratio]{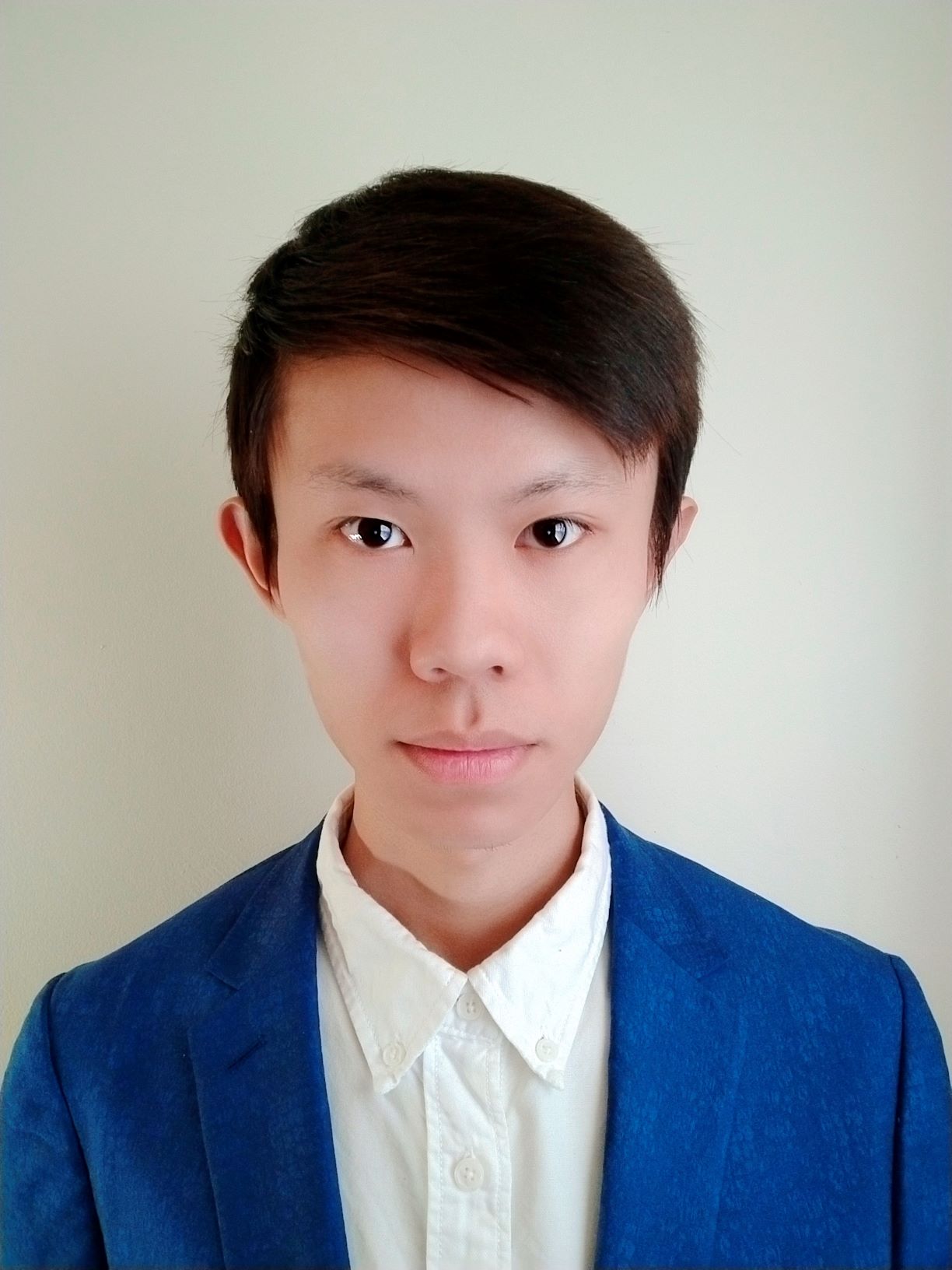}}]{Ruiqi Wang} received his B.E. degree in Information and Software Engineering from University of Electronic Science and Technology of China, in 2018. He is currently pursuing a Master degree in Computer Science at University of Delaware. His current research interests are in the area of social networks and information diffusion.
\end{IEEEbiography}
\begin{IEEEbiography}[{\includegraphics[width=1.0in,clip,keepaspectratio]{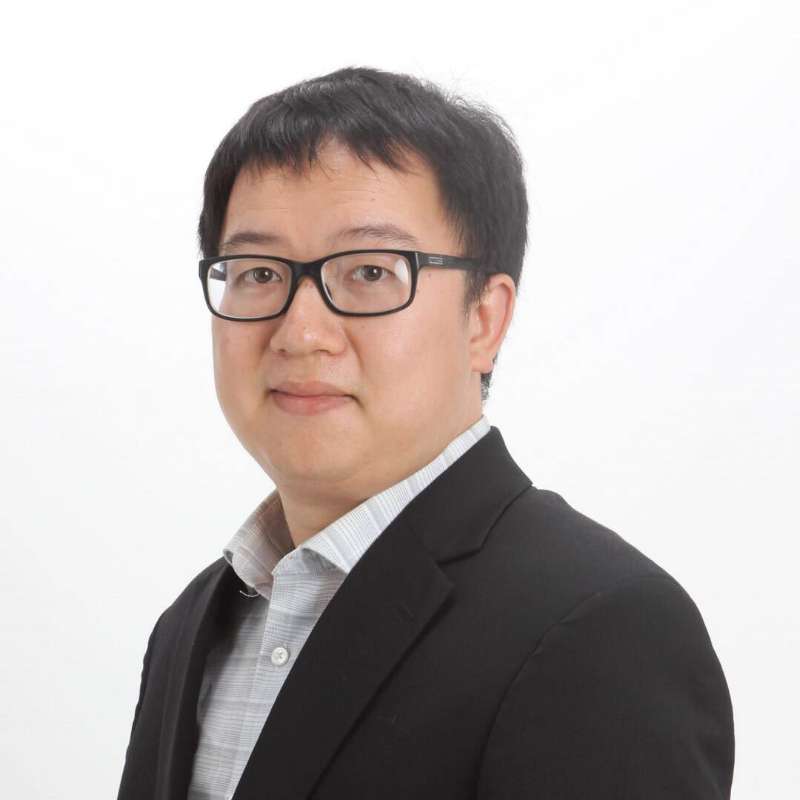}}]{Zheng Dong (M'19)} is an assistant professor in the Department of Computer Science at Wayne State University. He received the PhD degree from the Department of Computer Science at the University of Texas at Dallas in 2019. His research interests include real-time cyber physical systems and mobile edge computing. He received the Outstanding Paper Award at the 38th IEEE RTSS. He is a member of the IEEE.
\end{IEEEbiography}

\begin{IEEEbiography}[{\includegraphics[width=1.in,clip,keepaspectratio]{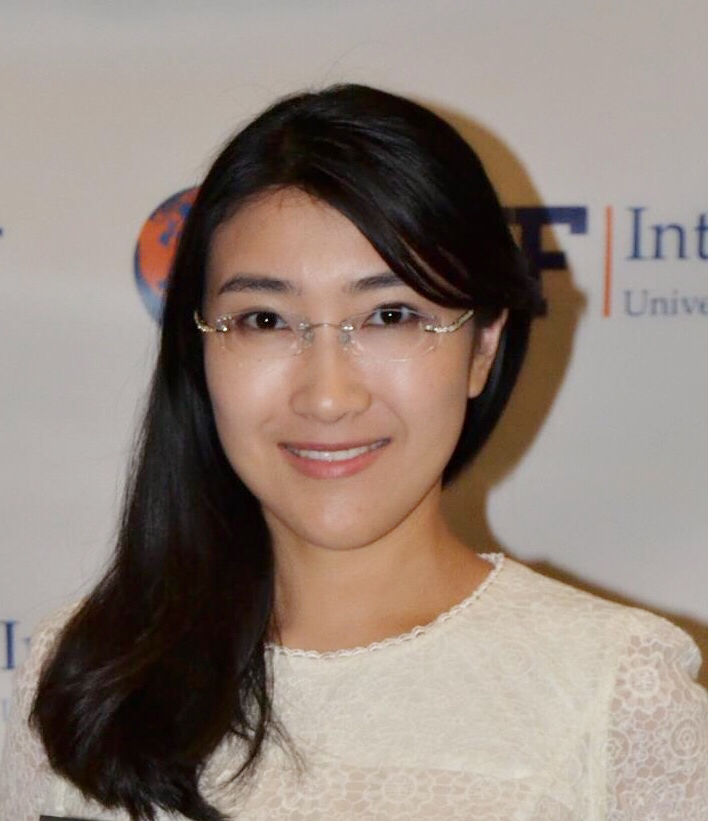}}]{Xiang Li (M’18)} is an Assistant Professor at the Department of Computer Engineering of Santa
Clara University. She received her Ph.D. degree in Computer and Information Science and Engineering
department of the University of Florida. Her research interests are centered on the large-scale
optimization and its intersection with cyber-security of networking systems, big data analysis, and cyber
physical systems. She has published 25 articles in various prestigious journals and conferences such
as IEEE Transactions on Mobile Computing, IEEE Transactions on Smart Grids, IEEE INFOCOM, IEEE ICDM, including one Best Paper Award in IEEE MSN 2014, Best Paper Nominee in IEEE ICDCS 2017, and Best Paper Award in IEEE International Symposium on Security and Privacy in Social Networks and Big Data 2018.
She has served as Publicity Co-Chair of International Conference on Computational Data \& Social Networks 2018, Session Chair of ACM SIGMETRICS International Workshop 2018, and on TPC of many conference
including IEEE ICDCS, IEEE ICDM workshop, COCOA etc., and also served as a reviewer for several journals such as IEEE Transactions on Mobile Computing, IEEE Transactions on Networks Science and Engineering and Journal of Combinatorial Optimization, etc.
\end{IEEEbiography}
\vfill

\newpage

\ifCLASSOPTIONcaptionsoff
  \newpage
\fi

\end{document}